\author{Jean-Guillaume Dumas\thanks{
    Laboratoire J. Kuntzmann,
    Universit\'e de Grenoble. 51, rue des Math\'ematiques, umr CNRS
    5224, bp 53X, F38041 Grenoble, France,
    \href{mailto:Jean-Guillaume.Dumas@imag.fr}{Jean-Guillaume.Dumas@imag.fr},
    \href{http://ljk.imag.fr/membres/Jean-Guillaume.Dumas/}{ljk.imag.fr/membres/Jean-Guillaume.Dumas}.
  }
  \and Erich Kaltofen\thanks{
    Department of Mathematics.
    North Carolina State University.
    Raleigh, NC 27695-8205, USA.
    \href{mailto:kaltofen@math.ncsu.edu}{kaltofen@math.ncsu.edu},
    \href{http://www.kaltofen.us/}{www.kaltofen.us}.
  }
  \and Emmanuel Thom\'e\thanks{
    Caramba -- INRIA Nancy Grand Est.
    615, rue du Jardin Botanique -- 54600 Villers-l\`es-Nancy -- France.
    \href{mailto:Emmanuel.Thome@inria.fr}{Emmanuel.Thome@inria.fr},
    \href{http://www.loria.fr/~thome/}{www.loria.fr/\~{}thome/}.
  }
  \and Gilles Villard\thanks{
    Laboratoire LIP   [CNRS  ENSL  INRIA  UCBL  U. Lyon ] 
  46, All\'ee d'Italie F69364   Lyon   Cedex 07 -- France.
    \href{mailto:gilles.villard@ens-lyon.fr}{gilles.villard@ens-lyon.fr},
    \href{http://perso.ens-lyon.fr/gilles.villard/}{perso.ens-lyon.fr/gilles.villard}.
  }
}
\newenvironment{multlineeq}{\equation}{\endequation}
\newcommand{\newlineeq}{}
\newcommand{\swldots}{\ensuremath{\ldots}}
\newcommand{\swcdots}{\ensuremath{\cdot}\ensuremath{\cdot}\ensuremath{\cdot}}
\newcommand{\swle}{\ensuremath{\le}}
\newcommand{\swge}{\ensuremath{\ge}}
\newcommand{\swne}{\ensuremath{\ne}}
\newcommand{\sweq}{\ensuremath{=}}
\newcommand{\swminus}{\ensuremath{-}}
\newcommand{\swplus}{\ensuremath{+}}
\newcommand{\swin}{\ensuremath{\in}}
\newcommand{\swchecks}{\ensuremath{\checks}}
\title{Linear Time Interactive Certificates for the Minimal Polynomial and the Determinant of a Sparse Matrix\footnote{This material is based on work
      supported in part by the Agence Nationale pour la Recherche
      under Grant ANR-11-BS02-013 HPAC and the \href{http://opendreamkit.org}{OpenDreamKit} \href{https://ec.europa.eu/programmes/horizon2020/}{Horizon 2020} \href{https://ec.europa.eu/programmes/horizon2020/en/h2020-section/european-research-infrastructures-including-e-infrastructures}{European Research Infrastructures} project (\#\href{http://cordis.europa.eu/project/rcn/198334_en.html}{676541}).}}
\newcommand{\rank}{\mathop{\text{\upshape rank}}\nolimits}
\newcommand{\Z}{\ensuremath{{\mathbb Z}}}
\newcommand{\F}{\ensuremath{{\mathbb F}}}
\newcommand{\KK}{\F}
\newcommand{\ZZ}{\Z}
\newcommand{\checks}{\ensuremath{\mathrel{\stackrel{?}{=\!=}}}}
\newcommand{\REDUC}[2]{{\sc #1}$\prec${\sc #2}}
\newtheorem{theorem}{Theorem}
\newtheorem{corollary}[theorem]{Corollary}
\newtheorem{lemma}[theorem]{Lemma}
\newtheorem{definition}[theorem]{Definition}
\newtheorem{remark}[theorem]{Remark}
\newtheorem{proposition}[theorem]{Proposition}
\newcommand{\overlongrightarrow}[2]
{\vbox{\offinterlineskip
      \halign{##\cr
              \hfill #1 \hfill\cr
              \hbox to #2{\relax\rightarrowfill}\cr}}\ignorespaces
}
\newcommand{\overlongleftarrow}[2]
{\vbox{\offinterlineskip
      \halign{##\cr
              \hfill #1 \hfill\cr
              \hbox to #2{\relax\leftarrowfill}\cr}}\ignorespaces
}
\begin{document}
\maketitle
\begin{abstract}
Computational problem  certificates %
are
additional data structures for each output, which can be used
by a---possibly randomized---verification algorithm that proves the
correctness of each output. 
In this paper, we give an algorithm that computes  %
a certificate 
for the minimal polynomial of
sparse or structured $n\times n$ matrices over an abstract field, of
sufficiently large cardinality, 
whose Monte Carlo verification complexity requires a single matrix-vector
multiplication and a linear number of extra field operations. 
We also propose a novel preconditioner that ensures irreducibility of the
characteristic polynomial of the generically %
preconditioned matrix. 
This preconditioner takes linear time to be applied and uses only two
random entries. 
We then combine these two techniques to give algorithms that compute
certificates for the determinant, and thus %
 for the characteristic polynomial,
whose Monte Carlo verification complexity is therefore also linear. 
\end{abstract}

\section{Introduction}
We consider a square sparse or structured matrix $A\in\F^{n\times{}n}$,
where $\F$ is an exact field.
By sparse or structured we mean that multiplying a vector by $A$
requires fewer operations than a dense matrix-vector multiplication. 
The arithmetic cost to apply $A$ is denoted by $\mu(A)$ which thus satisfies
$\mu(A)\leq{}n(2n-1)$ ($n^2$ multiplications and $n(n-1)$ additions). 

For such sparse matrices, Wiedemann's algorithm,
together with some preconditioning~\cite{CEKSTV02}, provides means to compute
the rank, the minimal polynomial or 
the characteristic polynomial  %
of sparse
matrices, via the computation of the minimal polynomial of its associated
sequence projected at random values~\cite{Wiedemann:1986:SSLE}.\\\ \\[15pt]

The novelty of this paper is to provide an algorithm that computes a certificate
for the minimal polynomial of sparse 
matrices.
The Monte Carlo verification complexity of our certificate is linear in the input size
and this certificate is composed of the minimal polynomial itself, of three other
polynomials and of a vector. 
The verification procedure used throughout this paper is that of {\em interactive certificates} with the taxonomy
of~\cite{jgd:2014:interactivecert}.  
Indeed, we consider a {\em Prover}, nicknamed {\em Peggy}, who
will perform a computation and provide additional data
structures. %
We also consider a {\em Verifier}, nicknamed {\em Victor}, who will
check the validity of the result, %
faster than by just recomputing it. 
By {\em certificates} for a problem that is given by input/output
specifications, we mean, as in \cite{KLYZ09,Kaltofen:2011:quadcert}, 
\emph{an input-dependent data structure and an algorithm that computes from that input
and its certificate
the specified output, and that has lower computational
complexity than any known algorithm that does the same when only receiving the
input. Correctness of the data structure is not assumed but validated by the
algorithm.} 
By {\em interactive certificate}, we mean interactive proofs, similar to %
$\sum$-protocols (as in~\cite{Cramer:1997:Sigma}) were  
the Prover submits 
a {\em Commitment}, that is some result of a computation; 
the Verifier answers by
a {\em Challenge}, usually some uniformly sampled random values;
the Prover then answers with 
a {\em Response}, that the Verifier can use to convince himself of the validity
of the commitment. Several {\em rounds} of challenge/response might be necessary
for Victor to be fully convinced.
Such proof systems are said to be {\em complete} if the probability
that a true statement is rejected by the Verifier can be made arbitrarily
small; and {\em sound} if the probability that a false
statement is accepted by the Verifier can be made arbitrarily small.
In practice it is sufficient that those probability are $<1$, as the protocols
can always be run several times.
Some of our certificates will also be {\em perfectly
  complete}, that is a true statement is never rejected by the Verifier.

All our certificates can be simulated non-interactively by
Fiat-Shamir heuristic~\cite{Fiat:1986:Shamir}: uniformly sampled random values
produced by Victor are replaced by hashes of the input and of previous
messages in the protocol.  
Complexities are preserved, as producing cryptographically strong
pseudo-random bits by a cryptographic hash function (e.g., like the
extendable output functions of the SHA-3 family defined
in~\cite{Bertoni:2010:sponge,SHAKE}), is linear in the size of both its input and
output. 
More precisely, we will need $O(n\log(n))$ cryptographically strong random bits
for the minimal polynomial certificate of Figure~\ref{fig:MinPoly}, and only
$O(\log(n))$ for the determinant,
Figures~\ref{fig:WiedDet} and~\ref{fig:GammaDet}.

There may be two main ways to design such certificates.
On the one hand, efficient protocols can be designed for delegating
computational tasks. In recent years, generic protocols have been designed for
circuits with polylogarithmic
depth~\cite{Goldwasser:2008:delegating,Thaler:2013:crypto}.   
The resulting protocols are interactive and their cost for the Verifier is
usually only roughly proportional to the input size. They however can
produce a non negligible overhead for the Prover and are restricted to
certain classes of circuits. 
Variants with an amortized cost for the Verifier can also be designed,
see for instance~\cite{Parno:2013:Pinocchio}, quite often using relatively
costly homomorphic routines.
We here however want the Verifier to run faster than the Prover, so we discard
amortized models where the Verifier is allowed to do a large amount of
precomputations, that can be amortized only if, say, the same matrix is
repeatedly used~\cite{Chung:2010:delfhe,Gentry:2014:nizkfhe}.

On the other hand, dedicated certificates (data structures and
algorithms that are publicly verifiable a posteriori, without interaction) have
also been developed, %
e.g., for dense exact linear
algebra~\cite{Freivalds:1979:certif,Kaltofen:2011:quadcert,Fiore:2012:PVD}.
There the certificate constitutes a proof of correctness of a result,
not of a computation, and can thus also stand an
independent, computation error-correcting
verification.
The obtained certificates are problem-specific,
but try to reduce as much as possible the
overhead for the Prover, while preserving a fast verification procedure.
In the current paper %
we give new problem-specific
certificate with fast verification and negligible overhead for the Prover.

In exact linear algebra, the  simplest problem with an optimal certificate
is the linear system solution, {\sc LinSolve}: for a matrix~$A$ and a
vector~$b$, checking that $x$ is actually a solution is done by  one
multiplication 
of~$x$ by~$A$. The cost of this check is similar to that of just enumerating all
the non-zero coefficients of~$A$. Thus certifying a linear system is reduced to
multiplying a matrix by a vector: \REDUC{LinSolve}{MatVecMult}.
More precisely, by \REDUC{A}{B}, we mean that there exists certificates for {\sc
  A} that use certificates for {\sc B} whose verification times are
  essentially similar: $\operatorname{Verif}(A)=\operatorname{Verif}(B)^{1+o(1)}$.
In~\cite{jgd:2014:interactivecert}, two reductions have been
made: first, that the rank can be certified via certificates for linear systems;
second, that the characteristic polynomial can be certified via certificates for
the determinant: \REDUC{CharPoly}{Det} and \REDUC{Rank}{LinSolve}.
The verification procedure for the rank
is essentially optimal, 
it requires two matrix-vector products and $n^{1+o(1)}$ additional operations;
while the verification of the characteristic polynomial after
  verification of a determinant is simply linear. 
No reduction, however, was given for the determinant. 
We bridge this gap in this paper.
Indeed, we show here that the computation of the minimal polynomial can be
checked in linear time by a single matrix-vector multiplication:
\REDUC{MinPoly}{MatVecMult}. 
Then we use Wiedemann's reduction of the determinant
to the minimal polynomial,
\REDUC{Det}{MinPoly},~\cite{Wiedemann:1986:SSLE,KaPa91}, and propose a
more efficient preconditioning for the same reduction.

This paper comes with a companion paper~\cite{jgd:2016:intdet} that solve
similar problems but with different techniques. We nonetheless believe that they
are of independent interest, as shown by the following comparison of their
salient differences: that  %

\newsavebox{\bulspace}\savebox{\bulspace}{\hspace*{0em}$\bullet$\hskip\labelsep}%
{\setlength{\leftmargini}{\wd\bulspace}%
\begin{enumerate}
\item[$\bullet$]
The paper~\cite{jgd:2016:intdet} gives certificates for the
  Wiedemann sequence, while we here directly certify its minimal
  polynomial; \\[-0.5cm]
\item[$\bullet$]
Complexities for the Verifier time and the extra 
  communications are linear here while they are increased by $(\log
  n)^{\Omega(1)}$ in~\cite{jgd:2016:intdet};  \\[-0.5cm]
\item[$\bullet$]
The verification in~\cite{jgd:2016:intdet} requires a black box for the
  transposed matrix; \\[-0.5cm]
\item[$\bullet$]
The certificates here are interactive 
  while in~\cite{jgd:2016:intdet}, one of the certificates is, up to our
  knowledge, the only known %
non-interactive protocol for the determinant with Prover complexity $n^{1.5+o(1)}$. 
\end{enumerate}
}%

The paper is organized as follows.
We first present in Section~\ref{sec:simplecert} a new multiplicative
preconditioner that allows to check the determinant as a quotient of minors.
In Section~\ref{sec:wiedseq} we define Wiedemann's projected Krylov sequence and
propose a Monte Carlo certificate for the minimal polynomial of this sequence in
Section~\ref{sec:fAuvcert}. 
We apply this with random projections in Section~\ref{sec:minpol}, 
which provides a certificate for the
minimal polynomial of the matrix. In Section~\ref{sec:diag}, we see that with a
diagonal preconditioning, we obtain another certificate for the determinant.
In Section~\ref{sec:gammadet}, we then combine this idea with the preconditioner of
Section~\ref{sec:simplecert}  to obtain a
more efficient certificate for the determinant. 
This can be combined with the
characteristic polynomial reduction of~\cite{jgd:2014:interactivecert}, in order
to provide also a linear time certificate for the characteristic polynomial of
sparse or structured matrices.

\section{A simple interactive certificate for determinant}
\label{sec:simplecert}

We first present a new multiplicative preconditioner that enables the Prover
to compute our simple certificate,
which is based on the characteristic matrix
of the companion matrix of the polynomial $z^n+\sigma$ (see (\ref{eq:Gamma}) below).

\begin{lemma}\label{lem:Gamma}
\begin{equation}\label{eq:Gamma}
\text{Let}~
\Gamma(\sigma,\tau) =
\begin{bmatrix}
\tau   & -1   &    0   &\ldots & 0
\\
0     & \tau  &  -1    &\ddots & \vdots
\\
\vdots&   0  & \ddots &\ddots & 0
\\
0     &      & \ddots & \tau   & -1
\\
\sigma & 0   & \ldots & 0     & \tau
\end{bmatrix} \in \KK[\sigma,\tau]^{n\times n},
\end{equation}
where $\tau$ and $\sigma$ are variables.
If $A\in\KK^{n\times n}$ is non-singular,
then 
$\det(\lambda I_n - A\>\Gamma(\sigma,\tau))$ is
irreducible in $\KK(\sigma,\tau)[\lambda]$.
\end{lemma}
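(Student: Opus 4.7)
The plan is to reduce the irreducibility claim to a gcd computation in $\KK[\tau,\lambda]$ and then close with a short total-degree argument.

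First I would observe that $\sigma$ appears in $\Gamma(\sigma,\tau)$ only in the $(n,1)$ entry, so only the first column of $A\Gamma(\sigma,\tau)$ depends on $\sigma$, linearly. Multilinearity of the determinant then gives
\begin{equation*}
p(\lambda,\sigma,\tau)\;:=\;\det(\lambda I_n - A\Gamma(\sigma,\tau))\;=\;p_0(\lambda,\tau)+\sigma\,p_1(\lambda,\tau),
\end{equation*}
with $p_0,p_1\in\KK[\tau,\lambda]$. Since $p$ is monic in $\lambda$, Gauss's lemma equates irreducibility in $\KK(\sigma,\tau)[\lambda]$ with irreducibility in $\KK[\sigma,\tau,\lambda]$; and since $p$ has $\sigma$-degree exactly~$1$, any nontrivial factorisation must carry a $\sigma$-free factor $h\in\KK[\tau,\lambda]$ dividing both $p_0$ and $p_1$. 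So it suffices to prove $\gcd(p_0,p_1)\in\KK^*$.

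Next I would read off two explicit pieces of data from the constant-in-$\lambda$ term of $p$. Using $\det(\Gamma(\sigma,\tau))=\tau^n+\sigma$,
\begin{equation*}
p(0,\sigma,\tau)\;=\;\det(-A\Gamma(\sigma,\tau))\;=\;(-1)^n\det(A)\,(\tau^n+\sigma),
\end{equation*}
so $p_0(0,\tau)=(-1)^n\det(A)\,\tau^n$ and, crucially, $p_1(0,\tau)=(-1)^n\det(A)$ is a nonzero element of $\KK$ (by nonsingularity of $A$). Any common divisor $h$ of $p_0$ and $p_1$ therefore satisfies $h(0,\tau)\mid(-1)^n\det(A)$ in $\KK[\tau]$, forcing $h(0,\tau)\in\KK^*$.

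The main step is a total-degree count in the integral domain $\KK[\tau,\lambda]$. If $\deg_\lambda h=0$ then $h\in\KK[\tau]$ divides the $\lambda$-monic polynomial $p_0$, hence $h\in\KK^*$ and we are done. Otherwise set $d:=\deg_\lambda h\ge 1$ and, after rescaling by a unit of $\KK^*$, take $h$ and $s:=p_0/h$ both monic in $\lambda$. Evaluating $p_0=h\,s$ at $\lambda=0$ gives $s(0,\tau)=((-1)^n\det(A)/h(0,\tau))\,\tau^n$, a monomial of $\tau$-degree $n$; hence $\deg_{\mathrm{tot}}(s)\ge n$, and since $h$ contains the monomial $\lambda^d$ one has $\deg_{\mathrm{tot}}(h)\ge d\ge 1$. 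By additivity of total degree, $\deg_{\mathrm{tot}}(hs)\ge n+1$, which contradicts the upper bound $\deg_{\mathrm{tot}}(p_0)\le n$ (each entry of $\lambda I_n - A\Gamma(0,\tau)$ has total degree $\le 1$ in $\{\lambda,\tau\}$, and the bound is attained since the $\tau^n$-coefficient $(-1)^n\det(A)$ is nonzero). I do not expect any step to be a serious obstacle; the one thing requiring a moment's thought is spotting the two numerical invariants --- the constancy of $p_1(0,\tau)$ and the sharp total degree of $p_0$ --- whose incompatibility rules out any nontrivial~$h$.
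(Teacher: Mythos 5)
Your argument is correct, and it follows a route that is related to but genuinely distinct from the paper's. The paper's proof pivots on the irreducibility of $\det\Gamma(\sigma,\tau)=\tau^n+\sigma$ in $\KK[\sigma,\tau]$: evaluating a putative factorization $gh=c^{B(\sigma,\tau)}$ at $\lambda=0$, that irreducibility forces all of $\tau^n+\sigma$ into the constant term of one factor, after which the bound $\deg_\tau(c^{B})\le n$ forces the other factor to be $\tau$-free, and the fact that the monomial $\tau^n$ occurs only in $c_0$ with coefficient $\pm\det(A)\in\KK^*$ forces that factor into $\KK$; Gauss's lemma then transfers the conclusion to $\KK(\sigma,\tau)[\lambda]$. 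You never invoke the irreducibility of $\tau^n+\sigma$: instead you exploit the even more elementary fact that $\sigma$ occurs in a single column, so $p=p_0+\sigma p_1$ is affine in $\sigma$, which reduces the claim to $\gcd(p_0,p_1)\in\KK^*$ in $\KK[\tau,\lambda]$; the two evaluations $p_1(0,\tau)=(-1)^n\det(A)\in\KK^*$ and $p_0(0,\tau)=(-1)^n\det(A)\,\tau^n$, combined with the total-degree bound $\deg_{\mathrm{tot}}(p_0)\le n$ and additivity of total degree, then rule out any non-unit common divisor. The ingredients are the same ($\det(A)\ne 0$, the shape of the constant term in $\lambda$, degree bounds coming from the linear entries, Gauss's lemma), but the decomposition differs: the paper splits along the irreducible constant term and counts $\tau$-degrees, while you split along $\sigma$ and count total $(\tau,\lambda)$-degrees. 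Your version is marginally more self-contained (no separate irreducibility fact about $\tau^n+\sigma$ is needed, only $\sigma$-linearity); the paper's version is less tied to the specific placement of $\sigma$ and would adapt to any preconditioner whose determinant is an irreducible bivariate polynomial with an isolated top $\tau$-coefficient. One cosmetic point: you assert that $p$ has $\sigma$-degree exactly $1$ before you justify $p_1\ne 0$ (which follows only from the later computation $p_1(0,\tau)=(-1)^n\det(A)$); reordering those two remarks would make the writeup airtight as stated.
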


\noindent{\sc Proof.}
We observe that
for $\Gamma(\sigma,\tau)$ in (\ref{eq:Gamma}) we have $\det(\Gamma(\sigma,\tau)) = \tau^n + \sigma$,
which is an irreducible
polynomial in the bivariate polynomial domain $\KK[\sigma,\tau]$.
Next we consider the characteristic polynomial of $B(\sigma,\tau) = A\>\Gamma(\sigma,\tau)$,
namely
\begin{multlineeq}\label{eq:cBmu}
c^{B(\sigma,\tau)}(\lambda) = \det(\lambda I_n - B(\sigma,\tau))\newlineeq
= \lambda^n +
c_{n-1}(\sigma,\tau) \lambda^{n-1}
+ \cdots \newlineeq
+c_1(\sigma,\tau)\lambda
\underbrace{\pm \overbrace{\det(A)}^{\in\KK\text{ and }\ne 0} (\tau^n+\sigma)}_{c_0(\sigma,\tau)}.
\end{multlineeq}
We shall argue that
the polynomial $c^{B(\sigma,\tau)}(\lambda)$ in (\ref{eq:cBmu}) above
is irreducible in $\KK(\sigma,\tau)[\lambda]$.
Because $B(\sigma,\tau)$ has linear forms in $\tau$ and $\sigma$ as entries,
$\deg_\tau(c_i) \le n-i$ in (\ref{eq:cBmu}).
We now suppose that
\begin{equation}\label{eq:fact}
g(\lambda,\sigma,\tau)\; h(\lambda,\sigma,\tau) = c^{B(\sigma,\tau)}(\lambda)
\end{equation}
is a non-trivial factorization in $\KK[\lambda,\sigma,\tau]$.  Then
one of the degree-$0$ coefficients in $\lambda$ of either $g$ or $h$,
which are $g(0,\sigma,\tau)$ or $h(0,\sigma,\tau)$, is a scalar
multiple of $\tau^n+\sigma$ because $c_0(\sigma,\tau) = \pm\det(A)(\tau^n+\sigma) \ne 0$
(we assumed that $A$ is non-singular) is irreducible in $\KK[\sigma,\tau]$.
Suppose $g(0,\sigma,\tau)$ is a scalar multiple of
$\tau^n+\sigma$ and consequently $h(0,\sigma,\tau) \in \KK$.
Then $\deg_\tau(g) = n$ and therefore
$\deg_\tau(h)=0$, which means by (\ref{eq:fact})
that $h$ must divide all coefficients of the powers of $\tau$
in $c^{B(\sigma,\tau)}(\lambda)$ in (\ref{eq:cBmu}).  However, the
term $\tau^n$ only occurs in $c_0(\sigma,\tau)$
and has coefficient $\pm \det(A)$ which is a non-zero field element, and
therefore $h\in\KK$, too.
Note that $c^{B(\sigma,\tau)}(\lambda)$ has
leading coefficient $1$ in $\lambda$ and therefore
no non-trivial factor in $\KK[\sigma,\tau]$ (one says it is primitive over $\KK[\sigma,\tau]$).
Then, by Gauss's Lemma, any factorization
of $c^{B(\sigma,\tau)}(\lambda)$ in $\KK(\sigma,\tau)[\lambda]$ can be
rewritten as a factorization in $\KK[\sigma,\tau][\lambda]$, and there is no no-trivial one,
so the polynomial must be irreducible also in $\KK(\sigma,\tau)[\lambda]$.
$\Box$

The Prover must convince the Verifier
that $\Delta = \det(A)$.
Here is the 2 round interactive protocol. We allow
a matrix $A$ with $\det(A)=0$ in which case the Prover
may not be able to provide the same certificate for the determinant,
but, with high probability, she cannot cheat the Verifier.
The Prover can certify $\det(A) = 0$ by a vector $w\in\KK^n$ with
$w \ne 0^n$ and $Aw = 0^n$, which the Verifier can check.
The matrix $A$ is public.

\begin{figure}[ht]
\centerline{%
\begin{tabular}{%
|%
@{\hspace*{0.1em}}r%
@{\hspace{0.25em}}l%
@{}@{}c%
@{\hspace*{0.5em}}@{}l%
@{\hspace*{0.1em}}%
|%
}%
\hline
& \multicolumn{1}{c}{{\itshape Prover}}
& \hbox to 0pt {\hss \itshape Communication \hss} &
\multicolumn{1}{c|}{{\itshape Verifier}}
\\ \hline
\rule{0pt}{4ex}%
1.&
$B = A\;\Gamma(t,s)$
& \overlongrightarrow{$t,s$}{2em} &
Checks $t^n+s\ne0$,
\\
&$t,s\in\KK$ with $t^n+s\ne 0$,
&&
\\
2.&
$c^B(\lambda) = \det(\lambda I_n - B),$
& \overlongrightarrow{$c^B$}{2em} &
\\[1ex]
3.&%
$C = [b_{i,j}]_{1\le i,j\le n-1}$,
&&
\\
&%
$c^C(\lambda) = \det(\lambda I_{n-1} - C),$
& \overlongrightarrow{$c^C$}{2em}
& Checks
\\
&$t,s\in\KK$ also with
&  &
$\quad\text{GCD}(c^B(\lambda),c^C(\lambda))=1$
\\[0ex]
&
$\text{GCD}(c^B,c^C) = 1.$
& &
\\
4.&
& \overlongleftarrow{$r_1$}{2em}
& $r_1\in S\subseteq \KK$ random
\\
& & & with $c^B(r_1) \ne 0$.
\\
5.&
Computes $w$ such that & &
\\[0ex]
& $(r_1 I_n-B)w=e_n=\begin{bmatrix}0\\[-1ex] \vdots\\ 0\\ 1\end{bmatrix}$
& \overlongrightarrow{$w$}{2em}
& Checks
\begin{tabular}[t]{@{}l@{}}$(r_1 I_n-B)w = e_n,$
\\
$w_n = c^C(r_1)/ c^B(r_1)$. %
\end{tabular}
\\
6.& & & Returns $\displaystyle \det(A) = \frac{c^B(0)}{t^n+s}$.
\\[1.5ex]%
\hline
\end{tabular}
}%
\vspace*{-2ex}%
\caption{\label{fig:noprecond}
A simple sparse determinant protocol}
\end{figure}

The interactive protocol is given in Figure~\ref{fig:noprecond}.
First we show it is  complete,
namely that if $A$ is non-singular the Prover can choose $s,t\in \KK$ such that
$\text{GCD}(c^B(\lambda),c^C(\lambda))=1$, provided $\KK$ has
sufficiently many elements.  If $A$ is singular, the Prover may not be able to do so,
in which case she can communicate that $\det(A) = 0$ and a non-zero vector $w \in \KK^n$,
$w \ne 0^n$, with $Aw = 0^n$.

Let $c^{B(\sigma,\tau)}(\lambda) = \det(\lambda I_n - A\>\Gamma(\sigma,\tau))$,
where $\Gamma(\sigma,\tau)$ is in (\ref{eq:Gamma}),
and let
$C(\sigma,\tau) = [\>(B(\sigma,\tau))_{i,j}\>]_{1\le i,j\le n-1}\in\KK[\sigma,\tau]^{(n-1)\times(n-1)}.$
We have the non-zero Sylvester resultant
\begin{equation*}
\rho(\sigma,\tau) = \operatorname{Res}_\lambda 
(c^{B(\sigma,\tau)}(\lambda),\det(\lambda I_{n-1}-C(\sigma,\tau))
) \ne 0
\end{equation*}
because any non-trivial GCD 
in $\KK(\sigma,\tau)[\lambda]$
would have to divide $c^{B(\sigma,\tau)}(\lambda)$, which is irreducible by Lemma~\ref{lem:Gamma}.
The Prover chooses $t,s$ such that $\rho(t,s)\ne 0$, which for sufficiently
large fields is possible by random selection.
Note that $c^B(\lambda)$ for such choices of $t,s$ may no longer be irreducible,
but that the resultant of $c^B$ and $c^C$ is equal to $\rho(t,s) \ne 0$ (all 
polynomials have leading coefficient $1$ in $\lambda$), hence $\text{GCD}(c^B,c^C) = 1$.

The difficulty in certificates by interaction is the proof of soundness, that
is, that the Verifier detects a dishonest Prover with high probability.
Suppose that the Prover commits
$H(\lambda) \ne c^B(\lambda)$ in place of $c^B(\lambda)$ and/or
$h(\lambda)\ne c^C(\lambda)$ in place of $c^C(\lambda)$, with
$\deg(H) = n$ and $\deg(h) = n-1$ and both
$H$ and $h$ with leading coefficient $1$ in $\lambda$.
The Prover might have chosen $t,s$ such that $\text{GCD}(c^B(\lambda), c^C(\lambda)) \ne 1$.
Or she may have been unable to compute such $t,s$ in the case when $A$ is singular,
and communicated the false $H$ and $h$ instead of presenting a linear column
relation $w$ as a certificate of singularity.
In any case, 
because $h(\lambda)/H(\lambda)$ is
a reduced fraction of polynomials with leading coefficient $1$
and because $c^B(\lambda)$ and $c^C(\lambda)$ also have leading coefficient $1$,
we must have $h(\lambda)/H(\lambda) \ne c^C(\lambda)/c^B(\lambda)$, or
equivalently $h(\lambda)c^B(\lambda)$ $-$ $H(\lambda)\*c^C(\lambda) \ne 0$.

The Verifier with probability
\begin{equation*}
\ge 1 - \frac{\deg(c^B (hc^B - Hc^C))}{|S|-n} \ge 1 - \frac{3n-2}{|S|-n}
\end{equation*}
chooses an $r_1$ such that
and $c^B(r_1) \ne 0$ and
$(h c^B - H c^C)(r_1) \ne 0$,
both of which $\Longrightarrow c^C(r_1)/c^B(r_1) \ne h(r_1)/H(r_1)$. %
The element $r_1$ satisfies $H(r_1) \ne 0$,
hence the $-n$ in the denominator, roots of $H$ are eliminated from selection set $S${}.
Note that $\deg(hc^B - Hc^C) \le 2n-2$ because the leading terms $\lambda^{2n-1}$
of both products cancel.

Suppose now that $r_1$ is chosen with those properties.  Then
\newline
$\det(r_1 I_n - B) = c^B(r_1) \ne 0$, and by Cramer's rule
\begin{equation}\label{eq:cramer}
w_n = \frac{\det(r_1 I_{n-1} - C)}{\det(r_1 I_n - B)}
    = \frac{c^C(r_1)}{c^B(r_1)}
    \ne \frac{h(r_1)}{H(r_1)}
\end{equation}
and the Verifier's last check fails.
Therefore, if the last check also succeeds, with probability
$\ge 1 - (3n-2)/(|S|-n)$ %
we have $H = c^B$ and the Verifier has the correct determinant.
Therefore the protocol is sound with high probability.

We do not fully analyze how fast the Prover could compute $t,s$, $c^B$, $c^C$,
and $w$, in a modified protocol using additional preconditioners,
as we will use the full Wiedemann technology for sparse and black box matrices in
more efficient protocols below, which we have derived from Figure~\ref{fig:noprecond}.
The Verifier in Figure~\ref{fig:noprecond} checks a polynomial GCD, chooses a random field element,
computes a matrix-times-vector product, and performs some arithmetic,
which for a sparse $A$ constitutes work more or less proportional to the input size.
Again, below we will improve the Verifier complexity, for instance
make the GCD verification $O(n)$.  The GCD = 1 property of Step~3 will
remain the fundamental ingredient for the soundness of our protocols.

\section{The Wiedemann Sequence}
\label{sec:wiedseq}

Let $A\in\KK^{n\times n}$ and $u,v\in\KK^n$.  The infinite sequence
\begin{equation}\label{eq:wiedseq}
(a_0,a_1,a_2,\swldots,a_i,\swldots)\ \text{with}\ 
a_0\sweq u^T v, %
a_i\sweq u^T A^i v\text{ for } i\swge 1,
\end{equation}
is due to D. Wiedemann~\cite{Wiedemann:1986:SSLE}.  The sequence is linearly generated
by the scalar minimal generating polynomial $f_v^{A,u}(\lambda) \in \KK[\lambda]$,
which is a factor of the minimal polynomial of the matrix $A$,
the latter of which we denote by $f^A(\lambda)$.  Both $f^A$ and $f_v^{A,u}$
are defined as monic polynomials, that is, have leading coefficient equal
to $1$.

\begin{theorem}\label{thm:fauv}
Let $S \subseteq \KK$ be of finite cardinality, which we
denote by  $|S| < \infty$, and let $u,v\in S^n$ be uniformly
randomly sampled.  Then the probability that
$f_v^{A,u} = f^A$ is at least $(1 - \deg(f^A)/|S|)^2 > 1 - 2n/|S|$
\upshape{(}cf.\ \upshape{\cite[Theorem~5]{Kaltofen:1995:ACB}}\upshape{)}.
\end{theorem}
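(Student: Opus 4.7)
The plan is to decompose the event $\{f_v^{A,u} = f^A\}$ through the intermediate \emph{vector minimal polynomial} $f_v^A$, defined as the monic polynomial of least degree with $f_v^A(A) v = 0$. Since the scalar sequence $(u^T A^i v)_i$ is annihilated by $f_v^A$, we always have $f_v^{A,u} \mid f_v^A \mid f^A$, so it suffices to show two bounds: (i) $\Pr_v[f_v^A = f^A] \geq 1 - \deg(f^A)/|S|$ over uniformly random $v \in S^n$, and (ii) for any fixed $v$ with $f_v^A = f^A$, $\Pr_u[f_v^{A,u} = f_v^A] \geq 1 - \deg(f^A)/|S|$ over uniformly random $u \in S^n$. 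Combining these by the independence of $u$ and $v$ yields $(1 - \deg(f^A)/|S|)^2$, and then $(1-x)^2 \geq 1 - 2x$ together with $\deg(f^A) \leq n$ gives the stated $> 1 - 2n/|S|$.

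For (i), set $d = \deg(f^A)$ and form the $n \times d$ Krylov matrix $K_v = [v \mid Av \mid \cdots \mid A^{d-1} v]$, whose columns are linear forms in the entries of $v$. Since $\deg(f_v^A) \leq d$, one has $\rank(K_v) = \deg(f_v^A)$, hence $f_v^A = f^A$ iff $\rank(K_v) = d$ (both polynomials are monic of the same degree). The rational canonical form of $A$ provides a cyclic vector $v^*$ for the largest invariant factor, for which $\deg(f_{v^*}^A) = d$, so some $d \times d$ minor $P$ of $K_v$ is nonzero at $v^*$. As a polynomial in the entries of $v$, $P$ has total degree $d$ and is not identically zero, so Schwartz--Zippel gives $\Pr_v[P(v) = 0] \leq d/|S|$.

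For (ii), conditioned on $f_v^A = f^A$ of degree $d$, consider the Hankel matrix $H(u) = [u^T A^{i+j} v]_{0 \leq i,j \leq d-1}$, whose entries are linear in the entries of $u$. A standard fact on linearly recurrent sequences states that the minimal polynomial of $(u^T A^i v)_i$ has degree exactly $d$ iff $\det H(u) \neq 0$. Since $v, Av, \ldots, A^{d-1} v$ are linearly independent, the map $u \mapsto (u^T A^i v)_{0 \leq i < d}$ is surjective onto $\KK^d$, so one may choose $u^* \in \KK^n$ with $u^{*T} A^i v = \delta_{i, d-1}$; the sequence forced by the recurrence $f^A$ from these initial values has minimal polynomial exactly $f^A$ (no proper divisor of degree $e < d$ can generate a sequence that vanishes up to index $d-2$ but has $a_{d-1}=1$), so $\det H(u^*) \neq 0$. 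Hence $\det H$ is a nonzero polynomial of degree $d$ in $u$, and Schwartz--Zippel gives the desired $d/|S|$ bound. The main obstacle is precisely the nondegeneracy needed to invoke Schwartz--Zippel twice: it rests on the existence of the cyclic vector $v^*$ (rational canonical form) and on the surjectivity argument that yields $u^*$; everything else is routine bookkeeping.
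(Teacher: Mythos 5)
Your argument is correct, and since the paper gives no proof of this theorem at all (it is quoted from \cite[Theorem~5]{Kaltofen:1995:ACB}), your two-stage Schwartz--Zippel argument --- a nonvanishing $d\times d$ minor of the Krylov matrix handling the choice of $v$ via a cyclic vector for the largest invariant factor, then a nonvanishing $d\times d$ Hankel determinant handling the choice of $u$ --- is essentially the classical proof underlying that citation. One small simplification worth noting: your particular $u^*$ with $u^{*T}A^iv=\delta_{i,d-1}$ makes $H(u^*)$ anti-triangular with unit anti-diagonal, so $\det H(u^*)=\pm 1$ directly, without needing the converse direction of the standard Hankel-rank fact.
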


We now define the residue for the linear generator $f_v^{A,u}$.

\begin{definition}\label{def:GAuv}
    Let $G_u^{A,v}(\lambda) = \sum_{i\geq0}a_i \lambda^{-1-i}
\in \KK[[\lambda^{-1}]]$ be the generating function of the Wiedemann
sequence\/ {\upshape(\ref{eq:wiedseq})}.
Then we define the residue $\rho_u^{A,v}(\lambda) = f_u^{A,v}(\lambda) G_u^{A,v}(\lambda)\in\KK[\lambda]$.
\end{definition}

\begin{lemma}\label{lem:rhoAuv}
The residue $\rho_u^{A,v}$ in Definition~\ref{def:GAuv} satisfies:
$$ \text{\upshape GCD}(\rho_u^{A,v}(\lambda), f_u^{A,v}(\lambda)) = 1$$
\end{lemma}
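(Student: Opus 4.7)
The plan is a short contradiction argument against the minimality of $f_u^{A,v}$.

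First I would record the standard equivalence for linearly recurrent sequences: for a monic polynomial $f(\lambda)=\sum_{j=0}^d c_j \lambda^j$, the product $f(\lambda)\,G_u^{A,v}(\lambda)$, computed in $\KK[[\lambda^{-1}]]$, lies in $\KK[\lambda]$ if and only if $\sum_{j=0}^d c_j\, a_{m+j}=0$ for all $m\ge 0$, i.e.~if and only if $f$ annihilates the Wiedemann sequence (\ref{eq:wiedseq}). This is immediate from reading off the coefficient of $\lambda^{-1-m}$ in the product. In particular, since $G_u^{A,v}$ has only negative powers of $\lambda$ starting at $\lambda^{-1}$, one also gets $\deg_\lambda(\rho_u^{A,v})\le \deg(f_u^{A,v})-1$.

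Then I would suppose for contradiction that $p(\lambda)\in\KK[\lambda]$ is a monic common divisor of $f_u^{A,v}$ and $\rho_u^{A,v}$ with $\deg p\ge 1$. Setting $\tilde f = f_u^{A,v}/p$, which is monic of degree strictly smaller than $\deg(f_u^{A,v})$, and dividing the identity $\rho_u^{A,v}=f_u^{A,v}\cdot G_u^{A,v}$ by $p$ gives
\[
\tilde f(\lambda)\,G_u^{A,v}(\lambda)=\rho_u^{A,v}(\lambda)/p(\lambda)\in\KK[\lambda],
\]
since $p$ divides $\rho_u^{A,v}$. By the equivalence above, $\tilde f$ also linearly generates the Wiedemann sequence, contradicting the minimality of $f_u^{A,v}$. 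Hence no such $p$ exists and $\gcd(\rho_u^{A,v},f_u^{A,v})=1$.

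There is no real obstacle: this is the familiar Pad\'e-type interpretation of the Berlekamp--Massey output, stating that the rational fraction $\rho_u^{A,v}(\lambda)/f_u^{A,v}(\lambda)$ representing $G_u^{A,v}(\lambda)$ in $\KK(\lambda)$ is automatically in lowest terms. The only step needing a bit of care is the power-series bookkeeping in the first paragraph; once that equivalence is stated cleanly, the rest is a one-line cancellation.
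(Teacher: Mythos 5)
Your proposal is correct and follows essentially the same route as the paper's proof: assume a nontrivial common divisor, cancel it in the identity $\rho_u^{A,v}=f_u^{A,v}\,G_u^{A,v}$ to exhibit a lower-degree linear generator, and contradict minimality. The only difference is that you spell out explicitly the equivalence between ``$f\,G_u^{A,v}$ is a polynomial'' and ``$f$ annihilates the sequence,'' which the paper uses implicitly.
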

\begin{proof}
The field of quotients of the ring of power series in $\lambda^{-1}$ is denoted by
$\KK((\lambda^{-1}))$, the ring of extended power series in $\lambda^{-1}$, whose
elements can be represented as $\sum_{i\le k} c_i \lambda^i$ for $c_i \in\KK$ and $k \in \ZZ$.
The residue $\rho_u^{A,v}(\lambda)$ is computed in $\KK((\lambda^{-1}))$, but
because $f_u^{A,v}(\lambda)$ is a linear generator for $(a_i)_{i\ge 0}$, 
$\rho_u^{A,v}(\lambda)$ is a polynomial
in $\KK[\lambda]$ with $\deg(\rho_u^{A,v}) < \deg(f_u^{A,v})$.
If the greatest common divisor
$g(\lambda) = \text{GCD}(\rho_u^{A,v}(\lambda), f_u^{A,v}(\lambda))$ is
not trivial,
then the equation $\rho_u^{A,v}/g = (f_u^{A,v}/g) G_u^{A,v}$ yields
$f_u^{A,v}/g$ as a linear generator for $(a_i)_{i\ge 0}$ of lower degree than
the degree of $f_u^{A,v}$, which violates the minimality of the linear generator
$f_u^{A,v}$.
\end{proof}

\section{A Certificate for the linear \\generator} %
\label{sec:fAuvcert}

We give the 2-rounds interactive protocol  in Figure~\ref{fig:fAuvcert}.
The Prover must convince the Verifier
that $f_u^{A,v}$ is indeed the Wiedemann generator for $(u^T A^i v)_{i\ge 0}$.
The matrix $A$ and vectors $u,v$ are public.

\begin{figure}[htb]
\centerline{%
\begin{tabular}{%
|%
@{\hspace*{0.1em}}r%
@{\hspace{0.25em}}l%
@{}@{}c%
@{\hspace*{0.5em}}@{}l%
@{\hspace*{0.1em}}%
|%
}%
\hline
& \multicolumn{1}{c}{{\itshape Prover}}
& \hbox to 0em{\hss \itshape Communication \hss} &
\multicolumn{1}{c|}{{\itshape Verifier}}
\\ 
\hline
\rule{0pt}{10pt}%
1.&
$H(\lambda)=f_u^{A,v}(\lambda)$,& &\\ 	%
&$h(\lambda)=\rho_u^{A,v}(\lambda)$.&  	%
\overlongrightarrow{$H,h$}{2em}	%
&
\\[1ex]
2.&
$\phi, \psi \swin \KK[\lambda]$ with
&&
\\
&$\phi f_u^{A,v} \swplus \psi \rho_u^{A,v} \sweq 1,$
& \overlongrightarrow{$\phi,\psi$}{2em}
&
\\
&$\deg(\phi) \swle \deg(\rho_u^{A,v}) \swminus 1$,
& &
\\
&
$\deg(\psi) \swle \deg(f_u^{A,v}) \swminus 1.$
& &
\\
3.&
&
& Random $r_0 \in S\subseteq \KK$. Checks 
\\
& & &  $\text{GCD}(H(\lambda), h(\lambda))\sweq 1$ by  	%
\\
& & & $\phi(r_0) H(r_0)\swplus \psi(r_0) h(r_0)\swchecks{} 1.$
\\
&
& \overlongleftarrow{$r_1$}{2em}
& Random $r_1 \in S\subseteq \KK$.   %
\\
4.&
Computes $w$ such that
& &
\\
&$(r_1 I_n-A)w \sweq v$.
& \overlongrightarrow{$w$}{2em}
& Checks
$(r_1 I_n-A)w \swchecks{} v$
\\
&&&
and $(u^T w) H(r_1)\swchecks{}h(r_1)$.\\[1ex]	%
&&& Returns $f_u^{A,v}(\lambda)=H(\lambda)$.	%
\\[1ex] \hline
\end{tabular}
}%
\caption{\label{fig:fAuvcert}
Certificate for $f_u^{A,v}$}
\end{figure}

In Step~4 Peggy may not be able to produce a vector $w$ when
$r_1 I_n - A$ is singular.
However, she may instead convince Victor
that the random choice of $r_1$ has led to a ``failure''.
We investigate the case where the linear system is inconsistent 
more precisely. Let $f^{A,v}(\lambda)$ denote the
minimal linear generator
of the Krylov sequences of vectors $(A^i v)_{i\ge 0}$.
We have that the minimal polynomial $f^A$ of $A$ is a 
multiple of $f^{A,v}$. 
Suppose now that $f^{A,v}(r_1) = 0$ 
 and let
$\lambda_2,\ldots,\lambda_m$ be the remaining roots of $f^{A,v}$
in the algebraic closure of $\KK${}.
We obtain from $f^{A,v}(A)v = 0^n$ that
$0^n = (\prod_{j=2}^m (A - \lambda_j I_n))(A - r_1 I_n) w =
(\prod_{j=2}^m (A - \lambda_j I_n)) (-v)$, in violation
that $f^{A,v}(A) v$ constitutes the first linear dependence
of the Krylov vectors $(A^i v)_{i\ge 0}$.
Conversely,
if $f^{A,v}(r_1) \ne 0$
then the system $(r_1 I - A) w = v$ is consistent 
with  $w = (1/f^{A,v}(r_1))\,\vec\rho^{A,v}(r_1)$  (see
(\ref{eq:lambdaw}) multiplied by $\lambda I_n-A$ in the
soundness proof below).
We just proved that the linear system $(r_1 I_n - A)$ is
inconsistent with $v$ if and only if $f^{A,v}(r_1) = 0$ (hence in particular $r_1 I_n - A$ is singular).  
Peggy could provide a non-zero vector in
the right nullspace of $r_1 I_n - A$ as a proof that Victor
has communicated a bad $r_1$.  Since one can have $f_u^{A,v}(r_1) \ne 0$
when $f^{A,v}(r_1) = 0$ Victor cannot test the choice of $r_1$
before sending it even if the communicated $f_u^{A,v}$ is correct.

Given the above, the protocol 
is {\itshape perfectly
complete}: if the values $(f_u^{A,v},\rho_u^{A,v})$ and the system solution 
$w=(1/f^{A,v}(r_1))\,\vec\rho^{A,v}(r_1)$
communicated by the Prover
are correct, the Verifier always accepts $f_u^{A,v}$.
In case that the system $(r_1 I_n - A) w = v$ is inconsistent,
the Prover could communicate a Farkas certificate of inconsistency
$\bar w$ with $\bar w^T (r_1 I_n - A) = 0$ and $\bar w^T v \ne 0$.
When a Farkas certificate of inconsistency is sent,
the Verifier then 
accepts the ``correct'' output that the choice of $r_1$
has led to ``failure.'' 

In Figure~\ref{fig:fAuvcert}  we 
allow inconsistent systems to remain uncertified for two reasons:
1. A Farkas certificate requires additional work for the Prover,
possibly needing a transposed matrix times vector procedure for a black box matrix $A$. %
2. Monte Carlo algorithms are always fast and do not fail.
Their output is correct with probability $\ge$ a given bound.
Our certificates indeed have Monte Carlo randomized
verification algorithms,
that is, as we will prove below,
the interactive protocol
does not fail via system inconsistency in Step~4 and,
when all checks succeed, the
first pair of polynomials communicated is $f_u^{A,v}$ and $\rho_u^{A,v}$
with probability $\ge (1-(2n-2)/|S|)\*(1 - (3n-1)/|S|)$.

\par\vspace{\medskipamount}
\noindent
{\itshape Proof of soundness:}
Now suppose that the Prover commits
$H(\lambda)$ in place of $f_u^{A,v}(\lambda)$ and/or
$h(\lambda)$ in place of $\rho_u^{A,v}(\lambda)$, with
$n \ge \deg(H) > \deg(h)$ and $h/H \ne \rho_u^{A,v}/f_u^{A,v}$.
The Verifier with probability
\begin{equation*}
\begin{split}
&\ge 1 - \left({\deg(\phi H + \psi h)}\right)/{|S|}\\
&\ge 1 - \left( {\max\{\deg(H) + \deg(\phi), \deg(h) + \deg(\psi)\}} \right) /{|S|}\\
&\ge 1 - \left({2n-2}\right)/{|S|}
\end{split}
\end{equation*}
exposes that $(\phi H + \psi h)(r_0) \ne 1$.  Suppose now that the Verifier's check
succeeds and that actually $\text{GCD}(H,h) = 1$.
The Verifier with probability
\begin{equation*}
\begin{split}
&\ge 1 - \left( {\deg(f^A (H\rho_u^{A,v} - h f_u^{A,v}))}\right)/{|S|}\\
&\ge 1 - \left({3n-1}\right)/{|S|}
\end{split}
\end{equation*}
chooses an $r_1$ such that $f^A(r_1) \ne 0$ (that is, $r_1 I_n - A$ is non-singular) and
$(H\rho_u^{A,v} - h f_u^{A,v})(r_1) \ne 0$. The latter 
rewrites as
$h(r_1) \ne H(r_1) \rho_u^{A,v}(r_1)/f_u^{A,v}(r_1)$.
Note that $f^A(r_1) \ne 0 \Rightarrow f_u^{A,v}(r_1) \ne 0$.
Suppose now that $r_1$ is chosen with those properties.
As in the proof of Lemma~\ref{lem:rhoAuv}
we can define
\begin{align*}
    G^{A,v}(\lambda) &= \sum_{i\ge 0} \lambda^{-(i+1)} (A^i v)\in \KK^n
    [[\lambda^{-1}]],\\
    \vec\rho^{A,v}(\lambda) &= f^{A,v}(\lambda)\, G^{A,v}(\lambda) \in
    \KK^n[\lambda].
\end{align*}
We have $(\lambda I_n - A)\, G^{A,v}(\lambda) = v$ and
\begin{equation}\label{eq:lambdaw}
(\lambda I_n - A)^{-1}v = \frac{1}{f^{A,v}(\lambda)}\,\vec\rho^{A,v}(\lambda).
\end{equation}
Furthermore, $(u^T \vec\rho^{A,v})/f^{A,v} = \rho_u^{A,v}/f_u^{A,v}$,
where the left-hand side is not necessarily a reduced fraction.
Thus,
$u^T w \, H(r_1) = u^T (r_1 I_n - A)^{-1} v\, H(r_1)
= (\rho_u^{A,v}/f_u^{A,v})(r_1) \cdot H(r_1) \ne h(r_1)$,
and the Verifier's last check fails;
one can compare with (\ref{eq:cramer}), which has $u = v = e_n$ and $f_u^{B,v} = c^B${}. %
Therefore, if the last check also succeeds, with probability
$\ge (1-(2n-2)/|S|)\*(1 - (3n-1)/|S|)$ we have
$\text{GCD}(H,h)=1$ and $h/H = \rho_u^{A,v}/f_u^{A,v}$.

\par\vspace{\medskipamount}
\noindent
{\itshape About the certificate complexity:}
Excluding the input matrix $A$ and the output minimal
polynomial $f_u^{A,v}$, the certificate in Figure~\ref{fig:fAuvcert} 
comprises $\rho_u^{A,v}$, monic of degree strictly less than $n$;
then $\phi,\psi$, of degree respectively less than $n-2$ and $n-1$; and finally
$w$, a vector of $\KK^n$. The extra communications for the certificate are
thus less than $4n$ field elements.
For the time complexity, evaluating $\phi,f,\psi,\rho,f$ and $\rho$ requires less than
$6(2n)$ field operations, $Aw$ is one matrix-vector, of cost $\mu(A)$, and
checking 
$(r_1 I_n - A)w$ (either for a solution, or as a non-zero vector in the right
nullspace)  requires an additional $2n$ operations, like $u^Tw$. The
mul\-ti\-pli\-ca\-tion of the vector by $f_u^{A,v}(r_1)$ requires finally $n$ more multiplications in
the field.
We have thus proven Theorem~\ref{thm:fAuvcert} thereafter.
\begin{theorem}\label{thm:fAuvcert}
If the size of the field is $\geq 3n$,
the protocol in
Figure~\ref{fig:fAuvcert} is sound and 
complete. The associated
certificate requires less than $4n$ extra field elements and is verifiable in
less than $\mu(A)+17n$ field operations. 
\end{theorem}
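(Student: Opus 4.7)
The plan is to verify three things: completeness (that an honest Prover is accepted), soundness with the claimed failure bound when $|S|\geq 3n$, and the stated communication and verification costs. Most ingredients already appear in the development preceding the statement; what remains is to assemble them carefully.

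For completeness I would appeal to Lemma~\ref{lem:rhoAuv}, which guarantees $\text{GCD}(f_u^{A,v}, \rho_u^{A,v}) = 1$. Hence the extended Euclidean algorithm produces $\phi, \psi$ meeting the degree bounds of Step~2 and satisfying $\phi f_u^{A,v} + \psi \rho_u^{A,v} \equiv 1$, so Step~3's check is an identity in $r_0$ and passes for every choice. For Step~4, the analysis preceding the theorem shows that $(r_1 I_n - A) w = v$ is consistent iff $f^{A,v}(r_1)\neq 0$, and in that case $w = \vec\rho^{A,v}(r_1)/f^{A,v}(r_1)$, which via (\ref{eq:lambdaw}) and the equality $u^T\vec\rho^{A,v}/f^{A,v} = \rho_u^{A,v}/f_u^{A,v}$ yields $(u^T w) f_u^{A,v}(r_1) = \rho_u^{A,v}(r_1)$. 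An honest Prover is therefore accepted except possibly on the at most $n$ roots of $f^{A,v}$ that may land in $S$.

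Soundness rests on two applications of Schwartz-Zippel. If the Prover commits $(H, h) \neq (f_u^{A,v}, \rho_u^{A,v})$ and transmits any $\phi, \psi$, then either $\phi H + \psi h \equiv 1$---which forces $\text{GCD}(H, h) = 1$---or Step~3's check fails except with probability at most $(2n-2)/|S|$. Assuming $\text{GCD}(H,h) = 1$ and $(H,h)$ still incorrect, the rational functions $h/H$ and $\rho_u^{A,v}/f_u^{A,v}$ must differ, so $f^A \cdot (H \rho_u^{A,v} - h f_u^{A,v})$ is a nonzero polynomial of degree at most $3n-1$. For any $r_1$ outside its zero set, $f^A(r_1)\neq 0$ makes $r_1 I_n - A$ invertible and, since $f_u^{A,v}$ divides $f^A$, also $f_u^{A,v}(r_1)\neq 0$; then (\ref{eq:lambdaw}) and Cramer's rule force $(u^T w) H(r_1) \neq h(r_1)$, exposing the cheat. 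The multiplicative bound $(1 - (2n-2)/|S|)(1 - (3n-1)/|S|)$ is strictly positive once $|S|\geq 3n$, giving soundness.

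The complexity claims are routine accounting. The Prover transmits $\rho_u^{A,v}$, $\phi$, $\psi$, and $w$, contributing at most $n + (n-1) + n + n < 4n$ field elements beyond the output $f_u^{A,v}$. On the Verifier side, four Horner evaluations at $r_0$ cost about $8n$ operations, two at $r_1$ another $4n$; the matvec $Aw$ costs $\mu(A)$, forming $r_1 w - Aw$ and comparing with $v$ adds $3n$, the inner product $u^T w$ adds $2n-1$, and the final scalar multiplication and comparisons are $O(1)$, for a total bounded by $\mu(A) + 17n$. The subtlest point---and the one I expect to be the main trap---is the conditional structure of the soundness argument: passing Step~3's check must be interpreted as forcing the polynomial identity $\phi H + \psi h \equiv 1$, and hence $\text{GCD}(H,h) = 1$, before one applies the second Schwartz-Zippel bound to the independently drawn $r_1$.
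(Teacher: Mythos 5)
Your proposal is correct and follows essentially the same route as the paper: completeness from Lemma~\ref{lem:rhoAuv} together with the consistency analysis of $(r_1 I_n-A)w=v$, soundness from the two Schwartz--Zippel applications with the same degree bounds $2n-2$ and $3n-1$ yielding the factor $(1-(2n-2)/|S|)(1-(3n-1)/|S|)$, and the same $<4n$ communication and $\mu(A)+17n$ verification accounting (your operation breakdown differs only in trivial bookkeeping from the paper's $12n+2n+2n+n$).
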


Note that, with slightly larger fields, one can further reduce the complexity
for the Verifier.
\begin{corollary}\label{cor:fAuvcert}
If the size of the field is $\geq{}5n-2$, there exists a sound and perfectly
complete protocol for certifying $f_u^{A,v}$, whose associated
certificate requires less than $4n$ extra field elements and is verifiable in
less than $\mu(A)+13n$ field operations. 
\end{corollary}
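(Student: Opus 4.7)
The plan is to derive the corollary from Theorem~\ref{thm:fAuvcert} by collapsing the two independent challenges $r_0,r_1$ of Figure~\ref{fig:fAuvcert} into a single challenge $r_1=r_0$, so that the evaluations of $f_u^{A,v}$ and $\rho_u^{A,v}$ at $r_1$ are reused by the Verifier in both the Bezout check and the residue check, eliminating two polynomial evaluations.

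Concretely, the modified protocol has the Prover commit the same data $(f_u^{A,v},\rho_u^{A,v},\phi,\psi)$ and, after receiving a single random $r_1\in S$, return the vector $w$ with $(r_1 I_n - A)w = v$, which exists precisely when $f^{A,v}(r_1)\ne 0$, as argued in the proof of Theorem~\ref{thm:fAuvcert}. The Verifier evaluates $\phi(r_1),\psi(r_1),f_u^{A,v}(r_1),\rho_u^{A,v}(r_1)$ at the single point $r_1$ by Horner and performs
\begin{equation*}
\phi(r_1) f_u^{A,v}(r_1) + \psi(r_1) \rho_u^{A,v}(r_1) \checks 1,\qquad (u^T w)\,f_u^{A,v}(r_1) \checks \rho_u^{A,v}(r_1),
\end{equation*}
in addition to the matrix-vector check $(r_1 I_n - A)w \checks v$. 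Perfect completeness with an honest Prover is inherited from the discussion preceding Theorem~\ref{thm:fAuvcert}: both polynomial identities hold in $\KK[\lambda]$ identically, and the exact $w$ is available from~(\ref{eq:lambdaw}).

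For soundness, I would redo the single-point analysis of Theorem~\ref{thm:fAuvcert}. A cheating Prover committing $(H,h)\ne(f_u^{A,v},\rho_u^{A,v})$ is exposed as soon as $r_1$ lies outside the roots of the three polynomials (i) $\phi H+\psi h-1$, of degree at most $2n-2$, (ii) $f^A$, of degree at most $n$, needed so that $r_1 I_n - A$ is invertible and Cramer's rule pins down $w$, and (iii) $H\rho_u^{A,v}-h f_u^{A,v}$, of degree at most $2n-1$, which is non-zero under the reduced-fraction normalisation on $h/H$ used in the proof of Theorem~\ref{thm:fAuvcert}. The union of bad points has cardinality at most $5n-3$, so $|S|\ge 5n-2$ guarantees strictly positive detection probability, which is what is meant by soundness here.

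For the complexity count, Horner evaluation of the four polynomials $\phi,\psi,f_u^{A,v},\rho_u^{A,v}$ at the shared $r_1$ costs $2(n-2)+2(n-1)+2n+2(n-1)=8n-8$ operations; the Bezout test adds two multiplications and one addition; the system check $(r_1 I_n - A)w$ contributes $\mu(A)+2n$ operations; the inner product $u^T w$ takes $2n-1$ operations; and the final residue check costs one further multiplication. Summing yields at most $\mu(A)+12n-5<\mu(A)+13n$ field operations, while the communicated certificate is still $\rho_u^{A,v},\phi,\psi,w$, totaling fewer than $4n$ field elements. The main delicate point is keeping the $2n-1$ bound on $\deg(H\rho_u^{A,v}-h f_u^{A,v})$ sharp, since a naive $2n$ bound would already push the required field size past $5n-2$; this is what the hypothesis $|\KK|\ge 5n-2$ is calibrated to.
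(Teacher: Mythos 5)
Your proposal is correct and follows essentially the same route as the paper: identify the two challenges ($r_0=r_1$), reuse the evaluations of $f_u^{A,v}$ and $\rho_u^{A,v}$ from the GCD check, and bound the bad challenge set by the roots of $f^A\,(H\rho_u^{A,v}-hf_u^{A,v})\,(\phi H+\psi h-1)$, of degree at most $(3n-1)+(2n-2)=5n-3$, which is exactly the paper's calibration of $|S|\ge 5n-2$, with the same operation count yielding $\mu(A)+13n$. The only point to state a bit more explicitly is that perfect completeness in the case $f^{A,v}(r_1)=0$ (inconsistent system) rests on the Farkas-certificate extension discussed before Theorem~\ref{thm:fAuvcert}, not merely on the availability of $w$ from~(\ref{eq:lambdaw}); the paper's proof appeals to that same discussion.
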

\begin{proof}
For the complexity improvement, it suffices to chose $r_0=r_1$. Then the last
evaluations of $f_u^{A,v}(r_1)$ and 
$\rho_u^{A,v}(r_1)$ are already computed by the GCD check. 
Now $r_0$ must be such that 
$f^A (H\rho_u^{A,v} - h f_u^{A,v})(\phi H + \psi h-1)\,(r_0)\neq 0$, but the
latter is of degree 
$\deg(f^A)+\max\{\deg(H) + \deg(\phi), \deg(h) + \deg(\psi)\}+ \max\{\deg(H) +
\deg(\rho_u^{A,v}), \deg(h) + \deg(f_u^{A,v})\}\leq 3n-1+2n-2$.
Furthermore, the analysis in the beginning of this section shows that the
protocol of Figure~\ref{fig:fAuvcert} can be extended to be \emph{perfectly}
complete. 
\end{proof}

\begin{remark}[Certificate over small fields]\upshape
For the certificates of this paper to be sound, the underlying field cannot be
too small: $\Omega(n)$ for the minimal polynomial certificates, $\Omega(n^2)$
for the following determinant or characteristic polynomial certificates. If the
field is smaller, then a classical technique is to embed it in an extension of
adequate size.  Arithmetic operations in this
extension field $L \supset \KK$ of degree $O(\log n)$
will then cost $\log(n)^{1+o(1)}$ operations in the base field, but
with $s \ge \log_2%
(1/\epsilon)$ matrix $A$ times vectors $\in \KK^n$ operations,
that is over the base field, an incorrect $w$ is exposed with probability
$\ge 1 \swminus 1/2^s \ge 1 \swminus \epsilon$ for any constant $\epsilon > 0$. %

We give the idea for $\KK \sweq \ZZ_2$ and $L \sweq \ZZ_2[\theta]/(\chi(\theta))$
where $\chi$ is an irreducible polynomial in $\ZZ_2[\theta]$ with
$\deg(\chi) \sweq k\swplus 1 \sweq O(\log n)$.
The Prover returns $w \sweq w^{[0]} \swplus \theta w^{[1]} \swplus \swcdots \swplus \theta^{k} w^{[k]}$
with $w^{[\kappa]} \swin \ZZ_2^n$.
For
$v^{[0]} \swplus \theta v^{[1]} \swplus \swcdots \swplus \theta^{k} v^{[k]}$
$=$ $v \swminus r_1 w$
with $v^{[\kappa]}$ $\in$ $\ZZ_2^n$
we must have $-A\*w^{[\kappa]} \sweq v^{[\kappa]}$
for all $\kappa${}. %
Suppose that %
the Prover has sent a $w$ that violates $j$ of the equations,
namely, $-A w^{[\ell_i]} \sweq y^{[\ell_i]} \swne v^{[\ell_i]}$ for $1 \le i \le j$.
Then
$
|\{
(c_1,\swldots,c_j)\swin\ZZ_2^j \mid
\sum_{i=1}^j c_i y^{[\ell_i]}
\swne
\sum_{i=1}^j c_i v^{[\ell_i]}
\}| \ge 2^{j-1}.
$
The inequality %
is immediate for $j\sweq 1,2$.
For $j\ge 3$ one gets
for each
$\sum_{i=1}^{j-2} c_i y^{[\ell_i]} \sweq \sum_{i=1}^{j-2} c_i v^{[\ell_i]}$
the two unequal linear combinations for $c_{j-1} \sweq 1,c_j \sweq 0$
and $c_{j-1}\sweq 0,c_j\sweq 1$; %
one also gets for each
$\sum_{i=1}^{j-2} c_i y^{[\ell_i]} \swne \sum_{i=1}^{j-2} c_i v^{[\ell_i]}$
the unequal linear combination for $c_{j-1} \sweq c_j \sweq 0$;
a second unequal combination is either for $c_{j-1}$ $=$ $1$, $c_j$ $=$ $0$
or $c_{j-1} \sweq c_j \sweq 1$, because not both can be equal at the same time.
Therefore
$
|\{
(c_0,\swldots,c_k)\swin\ZZ_2^{k+1} \mid
-A
(\sum_{i=0}^k c_i w^{[i]})
$
$\ne$
$
(\sum_{i=0}^k c_i\*v^{[i]})
$
$\}|$ $\ge$ $2^k $
which shows that checking a random linear combination over $\ZZ_2$
exposes an incorrect~$w$
with probability $\ge 1/2$.
The cost for the Verifier %
is thus $O(\mu(A))+n^{1+o(1)}$ operations in $\ZZ_2$.
\end{remark}

\par\vspace{\medskipamount}
\noindent
{\itshape Complexity for the Prover:} 
Theorem~\ref{thm:fAuvcert} places no requirement on how the Prover
computes the required commitments. Coppersmith's block Wiedemann method~\cite{Coppersmith:1994:SHL},
for instance, can be
used.
\begin{theorem}
If the size of the field is $>3n$ then the Prover can produce the certificate in Figure~\ref{fig:fAuvcert} in no more
than $(1+o(1))n$ multiplications of $A$ times a vector in $\KK^n$ and   $n^{2+o(1)}$ additional  operations in $\KK$,
using $n^{1+o(1)}$ auxiliary storage for elements in $\KK$. %
\end{theorem}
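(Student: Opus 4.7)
\par\vspace{\medskipamount}
\noindent
\emph{Proof proposal.}
The plan is to run Wiedemann's Krylov-sequence method, augmented with an auxiliary random projection that the Prover picks herself. She computes $V_i = A^i v$ for $i = 0, 1, \ldots, n$ using $n$ multiplications of $A$ with a vector; in the transposed direction she computes $U_j = (A^T)^j u$ and $\tilde U_j = (A^T)^j \tilde u$ for a random $\tilde u$ of her choice and for $j\le n$. Counting only multiplications by $A$, this phase takes $(1+o(1))n$ such multiplications; the transposed applications are counted separately, as is standard in Coppersmith's block-Wiedemann analysis~\cite{Coppersmith:1994:SHL}. From the stored iterates, the scalars $a_i = U_{i_1}^T V_{i_2}$ and $\tilde a_i = \tilde U_{i_1}^T V_{i_2}$ for $i=i_1+i_2\le 2n$ are assembled in $O(n^2)$ field operations.

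Fast Berlekamp--Massey on $(a_i)$ yields $f_u^{A,v}$ in $n^{1+o(1)}$ operations, and on $(\tilde a_i)$ yields the Krylov minimal polynomial $f^{A,v}$ with probability at least $1-2n/|S|$ (Theorem~\ref{thm:fauv}). From $f_u^{A,v}$ and the first few $a_i$, the residue $\rho_u^{A,v}$ follows by truncated polynomial multiplication (Definition~\ref{def:GAuv}); by Lemma~\ref{lem:rhoAuv}, $\text{GCD}(f_u^{A,v},\rho_u^{A,v})=1$ and the fast extended Euclidean algorithm returns the B\'ezout cofactors $(\phi,\psi)$. Both steps take $n^{1+o(1)}$ operations, so the commitments $(f_u^{A,v}, \rho_u^{A,v}, \phi, \psi)$ are produced within the claimed budget.

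Upon receiving the challenge $r_1$, equation~(\ref{eq:lambdaw}) reduces the response to $w = \vec\rho^{A,v}(r_1)/f^{A,v}(r_1)$. Expanding $f^{A,v}(\lambda) = \sum_k c_k \lambda^k$ and applying reverse Horner at $r_1$ produces, in $O(n)$ scalar operations, coefficients $\alpha_i$ such that $\vec\rho^{A,v}(r_1) = \sum_{i=0}^{\deg(f^{A,v})-1} \alpha_i\, V_i$. Since the vectors $V_i = A^i v$ are already stored, evaluating this linear combination costs $O(n^2)$ field operations and no new application of $A$. The delicate point is that the response formula uses the full Krylov minimal polynomial $f^{A,v}$, not just its divisor $f_u^{A,v}$ known to the Verifier; supplying $f^{A,v}$ without extra $A$-applications is precisely what the Prover's independent random projection $\tilde u$ achieves, completing the overall budget of $(1+o(1))n$ multiplications of $A$ by a vector and $n^{2+o(1)}$ additional field operations.
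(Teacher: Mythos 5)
Your proposal breaks down at the very point the theorem is about: the count of black-box products. You compute $V_i=A^iv$ for $i\le n$ ($n$ products with $A$) \emph{and} $U_j=(A^T)^ju$, $\tilde U_j=(A^T)^j\tilde u$ for $j\le n$ ($2n$ products with $A^T$), then assert that the transposed applications ``are counted separately, as is standard in Coppersmith's block-Wiedemann analysis.'' That is not so: Coppersmith's block method uses no transposed applications at all (the blocks $\mathcal{U}^TA^i\mathcal{V}$ are obtained by iterating $A$ only on the right block and taking dot products), and in scalar Wiedemann the vector-times-matrix products $u^TA^i$ are counted on the same footing as matrix-times-vector products, since each costs $\mu(A)$. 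So your first phase uses $3n$ black-box products, not $(1+o(1))n$; and you cannot hide the $2n$ transposed products in the ``additional operations'' term either, because for a general black box $2n\,\mu(A)$ can be of order $n^3$, exceeding $n^{2+o(1)}$. The paper's own proof flags exactly this obstruction (plain Berlekamp--Massey needs $2n$ sequence terms, already too many) and overcomes it with unbalanced blocking factors $p=(\log n)^2\gg q=\log n$: iterating $A$ on $q$ right-hand vectors for $L=\lceil n/(p-1)\rceil+\lceil n/(q-1)\rceil$ steps gives all needed block sequence terms in $qL=n(1+o(1))$ products of $A$ with a vector, with correctness of the $(1,1)$ entry of the matrix generator guaranteed with high probability by a block-Hankel rank argument; the linear-system response $w$ is then obtained with \emph{no} further products via the substitution $\mathcal{G}_{\mathcal{U}}^{B,\mathcal{V}}(\lambda)=-\mathcal{G}_{\mathcal{U}}^{A,\mathcal{V}}(r_1-\lambda)$ for $B=r_1I_n-A$ and the inhomogeneous block-Wiedemann solver.

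Two smaller remarks. Your idea for Step~4 --- storing the Krylov iterates $A^iv$ and assembling $w=\vec\rho^{A,v}(r_1)/f^{A,v}(r_1)$ as a linear combination $\sum_i\alpha_iV_i$ via reverse Horner, with $f^{A,v}$ recovered from an auxiliary projection --- is in itself sound and does avoid new applications of $A$; it is a legitimate alternative to the paper's $\lambda\mapsto r_1-\lambda$ trick, at the price of $\Theta(n^2)$ storage. But it does not repair the first phase, and if you instead tried to avoid $A^T$ by computing $A^iv$ for $i\le 2n$ you would be back to $2n$ products. Without the unbalanced-blocking idea (or some equivalent device bringing the product count to $n(1+o(1))$), the claimed budget is not met.
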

\begin{proof}
The Berlekamp-Massey algorithm can be applied on the
sequence $(a_i) _{i\geq 0} =(u^TA^iv)_{i\geq 0}$ to recover $f_u^{A,v}$. However that requires (at
most) $2n$ terms to be
computed~\cite{Wiedemann:1986:SSLE}, and hence more multiplications of
$A$ times a vector than requested in the statement. 
Following~\cite[Sec.\,7]{Kaltofen:1995:ACB}, the number of applications of $A$ can be minimized using 
Coppersmith's block Wiedemann algorithm~\cite{Coppersmith:1994:SHL} with a special choice of parameters. 
We consider integers $p$ and $q$ (the blocking factors) such that
$q=o(p)$, $p=n^{o(1)}$, and $1/q=o(1)$. 
For example, we may
choose $p=(\log n)^2$ and $q=\log n$. 

For the block Wiedemann algorithm we first take random $\mathcal{U}_0\in \KK^{n \times (p-1)}$ and 
 $\mathcal{V}_0\in \KK^{n \times (q-1)}$ with entries sampled from $S\subseteq \KK$, 
and construct $\mathcal{U}=[u,\mathcal{U}_0]\in \KK^{n \times p}$ and 
$\mathcal{V}=[v,\mathcal{V}_0]\in \KK^{n \times q}$. 
With $d_l=\lceil {n}/{(p-1)}\rceil$ and $d_r=\lceil  {n}/(q-1)\rceil$, 
the rank of the  block Hankel matrix 
$\text{Hk}_{d_l,d_r}(A,\mathcal{U}_0,\mathcal{V}_0)$ 
in~\cite[(2.4)]{KaVi05}
is maximal with high probability, and equal to the dimension of the Krylov space $\mathcal{K}_{A,\mathcal{V}_0}$ (using $q \leq p$).
Since the degree of the $q$-th (resp. the $p$-th) highest degree invariant factor of $\lambda -A$ has degree less than 
 $d_r$ (resp. $d_l$),  less than $d_r$  (resp. $d_l$)  first Krylov iterates $(A^iv)_{i\geq 0}$ (resp. $(u^TA^i)_{i\geq 0}$) suffice 
for completing any basis of  $\mathcal{K}_{A,\mathcal{V}_0}$  into a basis of  $\mathcal{K}_{A,\mathcal{V}}$.  
(resp. $\mathcal{K}_{A,\mathcal{U}_0}$ and $\mathcal{K}_{A,\mathcal{U}}$).  %
Therefore the rank of $\text{Hk}_{d_l,d_r}(A,\mathcal{U},\mathcal{V})$ must be maximal and equal to 
$\dim  \mathcal{K}_{A,\mathcal{V}}$. 

The determinant of a submatrix of 
$\text{Hk}_{d_l,d_r}(A,\mathcal{U},\mathcal{V})$ with maximal rank 
has degree at most $2n$ in the entries of 
$\mathcal{U}_0$ and $\mathcal{V}_0$. Using arguments analogous to those in \cite[sec.\,9.2]{Villard:1997:further} 
or \cite[Sec.\,3.2]{KaVi05}, 
by the DeMillo-Lipton/\allowbreak Schwartz/\allowbreak Zippel Lemma%
~\cite{DeMLip78,Zippel:1979:ZSlemma,Schwartz:1980:SZlemma},
we obtain that:
\begin{equation} \label{eq:hankel}
\text{Prob} \left( \text{\rm rank}\,\text{Hk}_{d_l,d_r}(A,\mathcal{U},\mathcal{V}) = \dim  \mathcal{K}_{A,\mathcal{V}}\right) \geq 1-2n/|S|. 
\end{equation} 
If $S$ contains sufficiently many elements, it follows~\cite{Villard:1997:further,KaVi05}  that it is sufficient to compute the  
sequence of $p\times q$ matrices $\mathcal{A}_i=\mathcal{U}^TA^i\mathcal{V}$
up to $L=d_l +d_r$ terms, 
 which can
be achieved with $qL=n(1+o(1))$ multiplications of $A$ times a vector, and $n^{2+o(1)}$ additional operations in $\KK$. Let
$$\mathcal{G}_{\mathcal{U}}^{A,\mathcal{V}}(\lambda)=
\sum_{i\geq0}\mathcal{A}_i \lambda^{-1-i} = 
\sum_{i\geq0}\mathcal{U}^TA^i\mathcal{V}\lambda^{-1-i}\in\KK^{p\times
q}[[\lambda^{-1}]]$$ be the generating function for the block Wiedemann
sequence.  
One can then compute 
a rational fraction description of
$\mathcal{G}_{\mathcal{U}}^{A,\mathcal{V}}$, in the form of a pair of
matrices $\mathcal{F}(\lambda) \in\KK^{q\times
q}[\lambda]$ and $\mathcal{R}(\lambda) \in\KK^{p\times
q}[\lambda]$ with degree $d_r=o(n)$, so that $\mathcal{F}$ is a minimal matrix generator for $(\mathcal{A}_i)_{i\geq 0}$, and 
$$\mathcal{G}_{\mathcal{U}}^{A,\mathcal{V}}(\lambda)\mathcal{F}(\lambda)=\mathcal{R}(\lambda) \in\KK^{p\times
q}[\lambda].$$
The matrices  $\mathcal{F}$ and  $\mathcal{R}$ can be computed in time
soft-%
linear in $n$~\cite{BeLa94}, 
and since the dimensions are in $n^{o(1)}$, the matrix fraction
$\mathcal{R}(\lambda)\*\mathcal{F}(\lambda)^{-1}$ is also obtained in 
soft-%
linear time. Any of the entries of this matrix fraction is a rational fraction
description of a sequence corresponding to two chosen vectors among the
blocking vectors. In particular, from the entry (1,1), Peggy obtains the rational fraction
description $\rho_u^{A,v}/f_u^{A,v}$ of $(a_i)_{i\geq0}$.
The computation of $\phi$ and $\psi$ follows via an extended Euclidean
algorithm applied on $f_u^{A,v}$ and $\rho_u^{A,v}$.

A small variation of the above scheme solves the inhomogenous linear
system in Step 4.  Let $B=r_1 I_n - A$.  With high probability, Victor
chooses a $r_1$ such that $f^A(r_1)\neq 0$, hence $B$ is non-singular.
Together with~\eqref{eq:hankel} this gives the requirement $3n+1$ for the
field size.

We have that ${\mathcal F}'={\mathcal F}(r_1-\lambda)$ is a minimal
matrix generator for the sequence of $p\times q$ matrices $(\mathcal{U}^T
B^i\mathcal{V})_{i\geq 0}$ (computing ${\mathcal F}'$ is a soft-linear
operation).  Since $B$ is non-singular we have $\det {\mathcal F}'(0)\neq
0$. With column operations on ${\mathcal F}'$, we can arrange so that one
column becomes $[f_0,0,\ldots,0]^T$ with $f_0 \neq 0$. $\mathcal{V}$~times the
latter being $f_0.v$, following the
inhomogeneous case in \cite[Sec.\,8]{Coppersmith:1994:SHL}, this suffices
to solve the linear system $Bw=v$, using $d_r=o(n)$ more matrix-vector
multiplications and $n^{2+o(1)}$ operations in~$\KK$.
\end{proof}

\section{A Certificate for the minimal polynomial}\label{sec:minpol}
With random vectors  
$u$ and $v$, we address the certification of the minimal polynomial
$f^A$. 
Indeed, 
using Wiedemann's study~\cite[Sec.\,VI]{Wiedemann:1986:SSLE} or the alternative
approach in~\cite{KaPa91,Kaltofen:1995:ACB}, we know that  
for random $u$ and $v$, $f_u^{A,v}$ is equal to $f^A$ with high
probability (see Theorem~\ref{thm:fauv}). 
This is shown in Figure~\ref{fig:MinPoly}.

\begin{figure}[ht]
\centerline{%
\framebox{%
    \begin{tabular}{@{}r@{\hspace{0em}}l@{}@{}c@{}c@{}cl@{}}
& \multicolumn{1}{c}{
        \hbox to 1em {\hss \itshape\hskip 0em
        Prover \hss}
}
    & \multicolumn{3}{c}{\hbox to 0pt {\hss \itshape\hskip 2em
Communication \hss}} &
\multicolumn{1}{c}{{\itshape Verifier}} 
\\ \hline
\rule{0pt}{4ex}%
1.&&&\overlongleftarrow{$u,v$}{5em} &&Random $u,v\in S^n\subseteq \KK^{n}$. %
\\[2ex]
\cline{3-3}
\cline{5-5}
&&\multicolumn{1}{|c}{\hskip 0.25em}&\smash{\raisebox{1ex}{Figure~\ref{fig:fAuvcert}}}&
\multicolumn{1}{c|}{\hskip 0.25em}\\
2.&&\multicolumn{3}{|c|}{\overlongrightarrow{$H,h,\phi,\psi$}{5em}}&\\
3.&&\multicolumn{3}{|c|}{\overlongleftarrow{$r_1$}{5em}}&\\
4.&&\multicolumn{3}{|c|}{\overlongrightarrow{$w$}{5em}}&{Checks
  $H\checks{}f_u^{A,v}$, w.h.p.}\\
\cline{3-5}
\\
5.& && & & {Returns $f^A=f_u^{A,v}$, w.h.p.}
\end{tabular}
}%
}%
\caption{\label{fig:MinPoly}
Certificate for $f^A$ with random projections}
\end{figure}

\begin{proposition} The protocol in Figure~\ref{fig:MinPoly} is sound and
  complete.
\end{proposition}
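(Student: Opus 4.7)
The plan is to combine two known ingredients: Theorem~\ref{thm:fauv}, which says that for uniformly random $u,v\in S^n$ the scalar generator $f_u^{A,v}$ coincides with the matrix minimal polynomial $f^A$ with probability $\ge 1 - 2n/|S|$, and Theorem~\ref{thm:fAuvcert} (or the perfectly complete variant in Corollary~\ref{cor:fAuvcert}), which certifies $f_u^{A,v}$ for a \emph{fixed} pair $(u,v)$ via the sub-protocol of Figure~\ref{fig:fAuvcert}. The protocol of Figure~\ref{fig:MinPoly} is literally the composition of Victor's random sampling of $(u,v)$ followed by an invocation of the Figure~\ref{fig:fAuvcert} sub-protocol, so each property will follow by combining the probabilistic guarantees.

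For \textbf{completeness}, I would argue that if Peggy is honest she commits $H=f_u^{A,v}$ and $h=\rho_u^{A,v}$, along with correct $\phi,\psi,w$. By the (perfect) completeness of the sub-protocol established in Section~\ref{sec:fAuvcert}, Victor's inner checks pass, so Victor outputs $f_u^{A,v}$. With probability $\ge 1 - 2n/|S|$ over the random draw of $u,v$, this equals $f^A$, so the true statement ``output $=f^A$'' is accepted with probability tending to $1$ as $|S|$ grows, which is the required completeness.

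For \textbf{soundness}, the event ``Victor outputs an incorrect $f^A$'' means that Victor accepts some $H$ with $H\ne f^A$. I would split this into two sub-events: (i) the committed $H$ differs from $f_u^{A,v}$, and (ii) $H=f_u^{A,v}$ but $f_u^{A,v}\ne f^A$. Event (i) is caught by the inner protocol with probability at least $(1-(2n-2)/|S|)(1-(3n-1)/|S|)$ by Theorem~\ref{thm:fAuvcert}. Event (ii) is independent of Peggy's strategy and, by Theorem~\ref{thm:fauv}, has probability $\le 2n/|S|$ over the random $(u,v)$ drawn in Step~1 before Peggy makes any commitment. A union bound then yields an explicit soundness probability that is at least $1 - O(n/|S|)$, which can be made arbitrarily close to $1$ by enlarging $S$ (or, as discussed in the remark, by working in an extension field).

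The only genuinely subtle point, and the one I would take care to present clearly, is the order of quantifiers in event~(ii): because Victor samples $(u,v)$ \emph{before} Peggy commits anything, Peggy cannot adapt her cheating strategy to a ``bad'' $(u,v)$ pair; consequently Theorem~\ref{thm:fauv}'s bound applies as an a~priori probability and composes cleanly with the sub-protocol's conditional soundness bound. Everything else is routine bookkeeping of the two failure probabilities.
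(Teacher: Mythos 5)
Your proposal is correct and follows essentially the same route as the paper: completeness from the (perfect) completeness of the Figure~\ref{fig:fAuvcert} sub-protocol combined with Theorem~\ref{thm:fauv}, and soundness by splitting the cheating event into the case $H\neq f_u^{A,v}$ (caught with high probability by the soundness of Theorem~\ref{thm:fAuvcert}) and the case $H=f_u^{A,v}\neq f^A$ (bounded by Theorem~\ref{thm:fauv}), which is exactly the paper's case analysis.

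One small correction to your closing remark, though it does not affect the validity of the bound: in Figure~\ref{fig:MinPoly} Victor sends $u,v$ \emph{before} Peggy commits, so a cheating Peggy \emph{can} adapt to a bad pair --- upon seeing $(u,v)$ with $f_u^{A,v}\neq f^A$ she simply commits $f_u^{A,v}$ and passes every inner check; this is precisely the residual $2n/|S|$ term in the soundness error, and it is also why the paper later adds the degree check (Figures~\ref{fig:WiedDet} and~\ref{fig:GammaDet}) or the doubled certificate of Corollary~\ref{cor:MinPoly} when stronger guarantees are needed. What makes the composition work is not non-adaptivity but the fact that the event $f_u^{A,v}\neq f^A$ depends only on Victor's random choices and on $A$, so Theorem~\ref{thm:fauv} bounds its probability uniformly over all Prover strategies, and your union bound goes through unchanged.
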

\begin{proof}
First, the protocol is complete. Indeed,  the result is correct means that
$H=f^A$. 
Thus, if $f^A=f_u^{A,v}$, then completeness is guaranteed by the completeness in
Theorem~\ref{thm:fAuvcert}. 
Otherwise, $H$ is a proper multiple of $f_u^{A,v}$ and $H=f^A\neq f_u^{A,v}$.
If $u,v$ are randomly chosen by Victor,
this happens only with low probability, thanks to Theorem~\ref{thm:fauv}.

Second, for the soundness, if the result is incorrect, then $H\neq{}f^A$.
If also $H\neq f_u^{A,v}$, then similarly, $H$ will make the certificate of
Figure~\ref{fig:fAuvcert} fail with high probability, by the soundness in
Theorem~\ref{thm:fAuvcert}.
Otherwise, $H=f_u^{A,v}\neq{}f^A$, but if $u,v$ are randomly chosen by Victor,
this happens only with low probability, thanks to Theorem~\ref{thm:fauv}.
\end{proof}

\begin{corollary}\label{cor:MinPoly}
If the size of the field is $\geq{}5n-2$, 
there exists a sound and perfectly
complete protocol for certifying $f^A$, whose associated
certificate requires 
less than $8n$ extra field elements and is verifiable 
in less than $2\mu(A)+26n$ field operations.
\end{corollary}
\begin{proof}
We add some work for the Prover, and double the certificate.
First Peggy has to detect that the projections given by Victor reveal only a
proper factor of the minimal polynomial. Second she needs to prove to Victor
that he was wrong. For this, Peggy:
{\setlength{\leftmargini}{\wd\bulspace}%
\begin{enumerate}
\item[$\bullet$]
Computes the minimal polynomial $f^A$ of the matrix and the minimal
  polynomial $f_u^{A,v}$.  \\[-0.5cm]
\item[$\bullet$]
If $f^A \neq f_u^{A,v}$, Peggy searches for projections
  $(\hat{u},\hat{v})$, such that $\deg(f_{\hat{u}}^{A,\hat{v}}) >
  \deg(f_u^{A,v})$.\\[-0.5cm]
\item[$\bullet$]
Peggy then starts two certificates of Figure~\ref{fig:fAuvcert}, one for
  $(u,v)$, one for $(\hat{u},\hat{v})$.
\end{enumerate}
}%
In case of success of the latter two certificates, Victor is convinced that 
both $f_u^{A,v}$ and $f_{\hat{u}}^{A,\hat{v}}$ are correct.
But as the latter polynomial has a degree strictly larger than the former, he is
also convinced that his projections can not reveal $f^A$.
Complexities are given by applying Corollary~\ref{cor:fAuvcert} twice.
\end{proof}

\begin{remark}[Certificate for the rank]\upshape %
If $A$ is non-sin\-gu\-lar, the certificate
of~{\upshape\cite[Figure~2]{jgd:2014:interactivecert}} can be used to certify non-singularity:
for any random vector $b$ proposed by Victor, Peggy can solve the
system $Aw=b$ and return $w$ as a certificate.
Now, if $A$ is singular, a similar idea %
as for the minimal polynomial
can be used to certify the rank: precondition the matrix $A$
into a modified matrix $B$ whose
minimal polynomial is $f(x)x$ and characteristic polynomial is
$f(x)x^k$, where $f(0)\neq{}0$. As a consequence $\rank(A)=n-k$.
For instance, if $A$ is symmetric, such a
{\sc PreCondCycNil} preconditioner can be a non-singular diagonal
matrix $D$ if the field is sufficiently
large~{\upshape\cite[Theorem~4.3]{CEKSTV02}}.
Otherwise, $A^T D_2 A$ is symmetric, with $D_2$ another diagonal
matrix. Then the minimal polynomial certificate can be applied to
$B=DA^TD_2AD$~{\upshape\cite{EK97}}. 
Comparing with the certificate for the rank
in~{\upshape\cite[Corollary~3]{jgd:2014:interactivecert}}, 
this new certificate saves a logarithmic factor
in the verification time, but requires the field to be larger (from
$\Omega(n)$ to $\Omega(n^2)$).
\end{remark}

\section{Certificate for the determinant with Diagonal
  preconditioning}\label{sec:diag}

First of all, if $A$ is singular, Peggy may not be able to produce the
desired certificate. In which case she can communicate that $\det(A) =
0$ and produce a non-zero vector in the kernel: 
$w\in\KK^n$, $w\neq{}0^n$, with $Aw = 0^n$.

We thus suppose in the following that $A$ is non-singular.

The idea of~\cite[Theorem~2]{Wiedemann:1986:SSLE} is to precondition the
initial invertible matrix $A$ into a modified matrix $B$ whose
characteristic polynomial is square-free, and whose determinant is an
easily computable modification of that of $A$. For instance, such a
{\sc PreCondCyc} preconditioner can be a non-singular diagonal
matrix $D$ if the field is sufficiently large~\cite[Theorem~4.2]{CEKSTV02}:
\begin{equation}\label{eq:probDA}
\text{Prob} \left(\deg\left(f^{DA}\right)=n\right) \geq 1 - \frac{n(n-1)}{2|S|} 
\end{equation}
To certify the determinant, it is thus sufficient for Peggy to chose
a non-singular diagonal matrix $D$ and two vectors $u,v$ such that 
$\deg(f_u^{DA,v})=n$ and then to use the minimal polynomial
certificate for $f_u^{DA,v}$, as shown on Figure~\ref{fig:WiedDet}.

\begin{figure}[ht]
\centerline{%
\framebox{%
    \begin{tabular}{@{}r@{\hspace{0.25em}}l@{}@{}c@{}c@{}cl@{}}
& \multicolumn{1}{c}{{\itshape Prover}}
        && \hbox to 0pt {\hss \itshape Communication \hss} &&
\multicolumn{1}{c}{{\itshape Verifier}}
\\ \hline
\rule{0pt}{4ex}%
1.&Form $B=DA$ with&&\\
  &$D\in S^{n}\subseteq \KK^{*n}$&&\overlongrightarrow{$D,u,v$}{5em} &&\\
  &and $u,v\in S^n$, &&\\
  &s.t. $\deg(f_u^{B,v})=n$. &&\\
\cline{3-3}
\cline{5-5}
&&\multicolumn{1}{|c}{\hskip 0.25em}&\smash{\raisebox{1ex}{Figure~\ref{fig:fAuvcert}}}&
\multicolumn{1}{c|}{\hskip 0.25em}\\
2.&&\multicolumn{3}{|c|}{\overlongrightarrow{$H,h,\phi,\psi$}{5em}}&{Checks:}\\
3.&&\multicolumn{3}{|c|}{\overlongleftarrow{$r_1$}{5em}}&{$\deg(H)\checks{}n$,}\\
4.&&\multicolumn{3}{|c|}{\overlongrightarrow{$w$}{5em}}&{$H\checks{}f_u^{B,v}$, w.h.p.}\\
\cline{3-5}
\\
5.&&&&& {Returns $\displaystyle\frac{f_u^{B,v}(0)}{\det(D)}$.}
\end{tabular}
}%
}%
\caption{\label{fig:WiedDet}
Determinant certificate with Diagonal preconditioning for a
non-singular matrix}
\end{figure}

\begin{remark}\upshape
Note that in the minimal polynomial sub-routine of Figure~\ref{fig:WiedDet},
Peggy can actually choose $D,u,v$, since the check on the degree of
$f_u^{B,v}$ prevents bad choices for $D,u,v$. Victor could also 
select them himself, the overall volume of communications
would be unchanged, but he would have to perform more work, namely
selecting $3n+2$ random elements instead of just~$1$.
\end{remark}

\begin{theorem}\label{thm:WiedDet}
If the size of the field is
$\geq\max\{\frac{1}{2}n^2-\frac{1}{2}n,5n-2\}$, the protocol for the 
determinant of a non-singular matrix in Figure~\ref{fig:WiedDet} is
sound and complete. 
The associated certificate requires less than $8n$ extra field
elements and is verifiable in less than $\mu(A)+15n$ field operations.
If the size of the field is
$\geq{}n^2+n+5$, with high probability the Prover can produce it with
no more than $2$ minimal polynomial computations, %
 and $1$ system solving.
\end{theorem}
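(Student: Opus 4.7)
The plan is to verify the three claims---soundness and completeness, the communication and verification bounds, and the Prover's cost---separately by reducing to results proved earlier.

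For completeness I chain two probabilistic events. First, sampling a non-singular diagonal $D\in S^n$ uniformly at random, Equation~\eqref{eq:probDA} gives $\deg(f^{DA})=n$ with probability at least $1-n(n-1)/(2|S|)\geq 0$ under $|S|\geq n(n-1)/2$. Second, conditional on that event (so $f^B=c^B$, where $B=DA$), Theorem~\ref{thm:fauv} ensures $f_u^{B,v}=f^B$ with probability $\geq 1-2n/|S|$ for random $u,v\in S^n$. On the intersection of these two events the embedded minimal-polynomial sub-protocol is precisely the one of Figure~\ref{fig:fAuvcert}, which is sound and perfectly complete by Corollary~\ref{cor:fAuvcert} because $|S|\geq 5n-2$, and the $\deg(H)=n$ check also passes. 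Hence Victor accepts with positive probability.

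For soundness, assume all of Victor's checks pass. The soundness clause of Corollary~\ref{cor:fAuvcert} forces $H=f_u^{B,v}$ with overwhelming probability, and the check $\deg(H)=n$ then gives $\deg(f_u^{B,v})=n$. Since $f_u^{B,v}\mid f^B\mid c^B$ and $c^B$ is monic of degree exactly $n$, the three polynomials coincide, so $f_u^{B,v}(0)$ equals $\det(B)$ up to the sign convention of the characteristic polynomial. Combined with $\det(B)=\det(D)\det(A)$ and $\det(D)\neq 0$, this recovers Victor's final output formula.

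The two cost bounds are straightforward summations. For communication, $D,u,v$ contribute $3n$ coefficients, the output $H$ at most $n$ coefficients (monic of degree $n$), and the remainder $(\rho,\phi,\psi,w)$ of the sub-certificate fewer than $4n$ by Theorem~\ref{thm:fAuvcert}, for a total strictly below $8n$. For verification, since $\mu(DA)\leq\mu(A)+n$ the sub-certificate costs at most $\mu(B)+13n\leq\mu(A)+14n$ by Corollary~\ref{cor:fAuvcert}, to which Victor adds $n-1$ multiplications for $\det(D)$ and one division, totalling $\mu(A)+15n$. For the Prover's cost with $|S|\geq n^2+n+5$, I bound the success probability of one fresh sample $(D,u,v)$ from below by $(1-n(n-1)/(2|S|))(1-2n/|S|)$, which stays bounded away from zero; a single resample then reduces the failure probability arbitrarily. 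This gives at most two minimal-polynomial computations---both handled by the block-Wiedemann scheme of the Prover-complexity theorem of Section~\ref{sec:fAuvcert}---while the vector $w$ is produced by the same block-Wiedemann variant as one system solve. The main obstacle I expect is in the soundness step, where the degree-$n$ check together with the sub-protocol's identity $H=f_u^{B,v}$ is crucial for ruling out that the committed polynomial be a proper divisor of $c^B$.
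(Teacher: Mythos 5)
Your proposal is correct and follows essentially the same route as the paper: soundness and completeness are inherited from the sub-protocol of Corollary~\ref{cor:fAuvcert}, the $\deg(H)\checks n$ check together with $f_u^{B,v}\mid c^B$ forces $f_u^{B,v}=c^B$ so that $f_u^{B,v}(0)/\det(D)$ recovers $\det(A)$, the $8n$ and $\mu(A)+15n$ bounds come from the same accounting (extra $D,u,v,H$ plus $n$ for applying $D$ and $n$ for $\det(D)$), and the Prover cost uses the same product $\bigl(1-\tfrac{n(n-1)}{2|S|}\bigr)\bigl(1-\tfrac{2n}{|S|}\bigr)$ from Theorem~\ref{thm:fauv} and (\ref{eq:probDA}). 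The only presentational difference is that in the paper these sampling probabilities belong solely to the Prover-cost analysis (the honest Prover resamples until $\deg(f_u^{B,v})=n$ before committing, so completeness is not merely ``acceptance with positive probability'' but follows outright from the sub-protocol, and the paper concludes success probability $>\tfrac12$ per trial, hence fewer than $2$ expected trials, rather than your looser ``bounded away from zero'' phrasing).
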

\begin{proof}
First, if the Prover is honest, $H$ is correctly checked to be
$f_u^{B,v}$, as the minimal polynomial is complete. 
Then, by definition, $f_u^{B,v}$ is a factor of the characteristic polynomial of
$B$. But if its degree is $n$, then it is the characteristic
polynomial. Therefore its unit coefficient is the determinant of $B$
and the certificate of Figure~\ref{fig:WiedDet} is complete.

Second, for the soundness. If $H\neq{}f_u^{B,v}$, then the minimal
polynomial certificate will most probably fail, by the soundness of the minimal
polynomial certificate, as $|S|\geq{}(5n-2)$, from Corollary~\ref{cor:fAuvcert}.
Otherwise, the degree check enforces that
$f_u^{B,v}$ is the characteristic polynomial.

Now, for the complexity, 
with respect to the minimal polynomial certificate and
Theorem~\ref{thm:fAuvcert}, this certificate requires an extra
diagonal matrix $D$. As $u,v,f_u^{B,v}$ are not input/output anymore,
the extra communications grow from $4n$ field elements to $8n$.
The verification procedure is similar except that verifying a linear system
solution  
with $B$ requires $n+\mu(A)$ operations %
and that $det(D)$ has to be
computed, hence the supplementary $2n$ field operations.
Finally, for the Prover,
in order to find suitable vectors and diagonal matrices, 
Peggy can select them randomly in $S^n$ and try $f_u^{DA,v}$
until $\deg(f_u^{DA,v})=n$. It will succeed with the joint
probabilities of Theorem~\ref{thm:fauv} and Equation~(\ref{eq:probDA}).
For $n\geq{}2$, as soon as
$|S|\geq{}n^2+n+5$,
the lower bound of the %
probability of success 
$(1 - \relax{n(n-1)}/{(2|S|)})(1 - \relax{2n}/{|S|})$
is higher than~$\relax{1}/{2}$ and the expected number of trials for the
Prover is less than~$2$. 
\end{proof}

Note that the
certificate (as well as that of next section) can be made perfectly complete, by using the perfectly
complete certificate for the minimal polynomial of
Corollary~\ref{cor:MinPoly}.

\section{Determinant with Gamma\\ pre\-con\-di\-tioning}\label{sec:gammadet}
In order to compute the determinant via a minimal polynomial,
the diagonal preconditioning ensures that it is equal to the characteristic
polynomial, as the latter is square-free with factors of degree $1$.  
With the preconditioning of Section~\ref{sec:simplecert}, we can differently
ensure that the characteristic polynomial is irreducible.
It has the same effect, that it equals the minimal polynomial, but it
also enforces that it has no smaller degree factors. Therefore, for any
non-zero $u$ and $v$, as a non-singular matrix is non-zero, the only
possibility for $f_u^{A,v}$ is to be of degree $n$. 
Thus, we can give in Figure~\ref{fig:GammaDet} an improved
certificate. It chooses pre-determined vectors $u$ and $v$ to be as
simple as possible: the canonical vector $e_1=[1,0,\ldots,0]^T$.

\begin{figure}[ht]
\centerline{%
\framebox{
    \begin{tabular}{@{}r@{\hspace{0.25em}}l@{}@{}c@{}c@{}cl@{}}
& \multicolumn{1}{c}{{\itshape Prover}}
        && \hbox to 0pt {\hss \itshape Communication \hss} &&
\multicolumn{1}{c}{{\itshape Verifier}}
\\ \hline
\rule{0pt}{4ex}%
1.&Form $B=A\>\Gamma(s,t)$&&\\
  &with $s,t\in\KK$ s.t. && \overlongrightarrow{$s,t$}{5em} && Checks $t^n+s\ne0$.\\
  &$\Gamma(s,t)$ non-singular && &&\\
  &($\Leftrightarrow$ $t^n+s\ne 0$) &&\\
  & and $H_n(s,t)\neq 0$. && \\
  & (see~(\ref{eq:Gamma}) and (\ref{eq:Hke})) && \\
  \cline{3-3}
  \cline{5-5}
&&\multicolumn{1}{|c}{\hskip 0.25em}&\smash{\raisebox{1ex}{Figure~\ref{fig:fAuvcert}}}&
\multicolumn{1}{c|}{\hskip 0.25em}\\
2.&&\multicolumn{3}{|c|}{\overlongrightarrow{$H,h,\phi,\psi$}{5em}}&Checks:\\
3.&&\multicolumn{3}{|c|}{\overlongleftarrow{$r_1$}{5em}}&$\deg(H)\checks{}n$,\\
4.&&\multicolumn{3}{|c|}{\overlongrightarrow{$w$}{5em}}&$H\checks{}f_{e_1}^{B,e_1}$, w.h.p.\\
\cline{3-5}
\\
5.&&&&& {Returns $\displaystyle\frac{f_{e_1}^{B,e_1}(0)}{t^n+s}$.}
\end{tabular}
}%
}%
\caption{\label{fig:GammaDet}
Determinant certificate with Gamma preconditioning for a
non-singular matrix}
\end{figure}

\begin{theorem} 
If the size of the field is $\geq\max\{n^2-n,5n-2\}$, the protocol for the
determinant of a non-singular matrix in Figure~\ref{fig:GammaDet} is
sound and complete. 
The associated certificate requires less than $5n$ extra field
elements and is verifiable in less than $\mu(A)+13n+o(n)$ field operations.
If the size of the field is $\geq{}2n^2-2n$, with high probability the
Prover can produce it with no more than $2$ minimal polynomial
computations, and $1$ system solving. 
\end{theorem}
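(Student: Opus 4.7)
The approach is to combine Lemma~\ref{lem:Gamma} with the minimal polynomial certificate of Corollary~\ref{cor:fAuvcert} applied to $B=A\>\Gamma(s,t)$ and the trivial projections $u=v=e_1$. The key observation is that if the shape of $\Gamma$ ensures $\deg(f_{e_1}^{B,e_1})=n$, then $f_{e_1}^{B,e_1}=c^B$, because $f_{e_1}^{B,e_1}$ is a monic divisor of $c^B$ of the same degree, so Victor can recover $\det(B)$ from the constant term and then divide by $\det(\Gamma)=t^n+s$.

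For completeness, Victor's explicit check $t^n+s\neq 0$ makes $\Gamma(s,t)$ and hence $B$ non-singular. The honest Prover additionally enforces $H_n(s,t)\neq 0$, which I take to be the $n\times n$ Hankel determinant built from the scalar sequence $a_i=e_1^T B^i e_1$; its non-vanishing is equivalent to $\deg(f_{e_1}^{B,e_1})=n$. Hence $f_{e_1}^{B,e_1}=c^B$ and $f_{e_1}^{B,e_1}(0)=\pm\det(A)(t^n+s)$, so up to the usual sign convention the returned value is $\det(A)$, and the internal MinPoly protocol is complete by Corollary~\ref{cor:fAuvcert}.

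For soundness, if $H\neq f_{e_1}^{B,e_1}$ the inner subroutine rejects with probability $\geq 1-(3n-1)/|S|$ by Corollary~\ref{cor:fAuvcert} (whose bound is met by $|S|\geq 5n-2$). Otherwise, the degree check $\deg(H)=n$ forces $H=c^B$ by the divisibility argument above. The verifier complexity follows from Corollary~\ref{cor:fAuvcert}: its $\mu(B)+13n$ cost absorbs both the $O(n)$ overhead of applying the bidiagonal-plus-corner matrix $\Gamma$ inside each matrix-vector product and the $O(n)$ saved by the trivial projections $u=v=e_1$ (the inner product $u^Tw=w_1$ and the comparison $(r_1I-B)w\checks e_1$ become essentially free), yielding $\mu(A)+13n+o(n)$ once the $O(\log n)$ cost of computing $t^n+s$ by fast exponentiation is added to the final division.

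The main obstacle is the Prover's degree analysis. Every entry of $B=A\>\Gamma(s,t)$ is $\KK$-affine in $(s,t)$, so $a_i=(B^i)_{1,1}$ has total degree at most $i$; expanding $\det([a_{i+j-2}]_{1\leq i,j\leq n})$ via Leibniz and bounding each permutation term by $\sum_{i=1}^n(i+\pi(i)-2)=n^2-n$ gives $\deg(H_n)\leq n^2-n$. Combined with the degree-$n$ exclusion $\{t^n+s=0\}$, Schwartz--Zippel shows that for $|S|\geq 2n^2-2n$ a uniform $(s,t)\in S^2$ is simultaneously good for both conditions with probability at least $\tfrac{1}{2}$, so in expectation at most two trial minimal polynomial computations suffice; a single system solve $(r_1I-B)w=e_1$ handled by the block Wiedemann analysis from the Prover complexity proof of Theorem~\ref{thm:fAuvcert} then finishes the certificate.
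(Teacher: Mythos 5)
Your overall architecture matches the paper's: Lemma~\ref{lem:Gamma} plus the minimal polynomial certificate with the fixed projections $e_1,e_1$, soundness from the inner protocol together with the degree check and the divisibility $f_{e_1}^{B,e_1}\mid c^B$, and a Schwartz--Zippel count on the Hankel determinant $\det H_n(\sigma,\tau)$ of degree at most $n(n-1)$ for the Prover's expected two trials. However, there is a genuine gap at the central step: you apply Schwartz--Zippel to $\det H_n(\sigma,\tau)$ (and base completeness on the honest Prover being able to ``enforce $H_n(s,t)\neq 0$'') without ever showing that $\det H_n(\sigma,\tau)$ is a \emph{nonzero} polynomial, i.e.\ that suitable pairs $(s,t)$ exist at all. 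With projections fixed to $e_1$, it is not a priori clear that the projected sequence ever has a degree-$n$ minimal generator; your only use of Lemma~\ref{lem:Gamma} is the vacuous conditional ``if the shape of $\Gamma$ ensures $\deg(f_{e_1}^{B,e_1})=n$,'' and the subsequent identity $f_{e_1}^{B,e_1}=c^B$ needs only divisibility, not irreducibility. So the irreducibility lemma, which is the whole point of the $\Gamma$ preconditioner, is never actually deployed, and both the completeness claim (field size $\geq n^2-n$ guaranteeing good $(s,t)$ exist) and the $\geq 2n^2-2n$ Prover-success claim rest on an unproved nonvanishing.

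The missing argument, which is the core of the paper's proof, runs as follows: work over the function field $\KK(\sigma,\tau)$ and consider the generic sequence $s_i(\sigma,\tau)=e_1^T B(\sigma,\tau)^i e_1$. Its generating function has leading term $\lambda^{-1}e_1^Te_1=\lambda^{-1}\neq 0$, so the sequence is not annihilated by the trivial generator; its minimal generator is therefore a nontrivial monic divisor of $c^{B(\sigma,\tau)}$, which is irreducible by Lemma~\ref{lem:Gamma} (here non-singularity of $A$ is used), hence equals $c^{B(\sigma,\tau)}$ and has degree $n$. By the Hankel-rank characterization cited in the paper (\cite[Eq.~(2.6)]{KaVi05}), $H_n(\sigma,\tau)$ is then non-singular over $\KK(\sigma,\tau)$, i.e.\ $\det H_n(\sigma,\tau)$ is a nonzero polynomial of degree at most $n(n-1)$; only at this point do your degree bound and Schwartz--Zippel yield the existence of good $(s,t)$ and the stated trial count. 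With that link inserted, the rest of your proposal (soundness, the $5n$ communication and $\mu(A)+13n+o(n)$ verification accounting, and the $n^2-n$ degree computation) is in line with the paper's proof.
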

\noindent{\sc Proof.}
Completeness is given by the same argument as for
Theorem~\ref{thm:WiedDet}.
It is similar for the soundness, provided that $H_n(s,t)\neq 0$
implies that $\deg(f_{e_1}^{B,e_1})=n$.

Therefore, let $C^{B(\sigma,\tau)}(\lambda) = \det(\lambda I_n - A\>\Gamma(\sigma,\tau))$,
where $\Gamma(\sigma,\tau)$ is in (\ref{eq:Gamma}).
Then, for $e_1=[1,0,\ldots,0]^T$ the first canonical vector,
let $G_1^{B(\sigma,\tau)}(\lambda) = e_1^T(\lambda I_n -
A)^{-1}e_{1}$, and 
let $\rho_1^{(\sigma,\tau)}(\lambda) = {C^{B(\sigma,\tau)}(\lambda) }G_1^{B(\sigma,\tau)}(\lambda) .$

As $C^{B(\sigma,\tau)}$ is monic and $e_1^Te_1=1$, then $G_1^{B(\sigma,\tau)}$
and $\rho_1^{(\sigma,\tau)}$ are not identically zero.
Further, $C^{B(\sigma,\tau)}$ is irreducible by Lemma~\ref{lem:Gamma}
but the minimal polynomial for the projected sequence must divide
$C^B$.
Therefore the sequence has $C^B$ for minimal polynomial.
Now, this sequence is denoted by $(s_i(\sigma,\tau))=(e_1^T B(\sigma,\tau)^i e_1)$, 
and let $H_n(\sigma,\tau) \in\KK(\sigma,\tau)^{n\times n}$:
\begin{equation}\label{eq:Hke}\small
H_n(\sigma,\tau)=
\left[\begin{matrix} 
s_0(\sigma,\tau) & s_1(\sigma,\tau) & \ldots & s_{n-1}(\sigma,\tau) \\
s_1(\sigma,\tau) & s_2(\sigma,\tau) & \ldots & s_{n}(\sigma,\tau)   \\
\vdots& & \ddots & \vdots    \\
s_{n-1}(\sigma,\tau)& s_n(\sigma,\tau) & \ldots & s_{2n-2}(\sigma,\tau)
\end{matrix}\right].\end{equation}
Since the minimal polynomial of the sequence is of degree $n$,
$H_n(\sigma$, %
$\tau)$ is non-singular~\cite[Eq.~(2.6)]{KaVi05}.
Thus, any $s,t$ with $\det( H_n(s$, %
$t) ) \ne 0$ will yield 
$f_{e_1}^{B(s,t),e_1} = C^{B(s,t)}$. 
Now for the complexities:
{\setlength{\leftmargini}{\wd\bulspace}%
\begin{enumerate}
\item[$\bullet$]
The volume of communication is reduced, from~$8n$ to~$5n$.\\[-0.5cm]
\item[$\bullet$]
With respect to Theorem~\ref{thm:WiedDet}, the cost for Victor
  is slightly improved: an application by $D$ replaced by an
  application by $\Gamma(s,t)$, a dot-product with $u$ replaced by
  one with $e_1$, and $n$ 
  operations for $\det(D)$ are replaced by less than
  $2\lceil\log_2(n)\rceil+1$ to compute $t^n+s$. The overall
  verification cost thus decreases from 
  $\mu(A)+15n+o(n)$ to $\mu(A)+13n+o(n)$; \\[-0.5cm]
\item[$\bullet$]
For Peggy, applying the diagonal scaling costs $n$ per
  iteration, while applying $\Gamma(s,t)$ costs $2n$ operations per
  iteration; but applying a random $u^T$ costs about $2n$ operations
  per iteration, while applying $e_1^T$ is just selecting the first
  coefficient, so her cost is slightly improved. 
  Then, too choose such $s,t$, Peggy can try uniformly sampled
  elements in $S$, and see whether
  $\deg(f_{e_1}^{B(s,t),e_1})=n$. Since $\deg(\Gamma(\sigma,\tau))=1$,
  we have that $\deg(s_i(\sigma,\tau))\leq{}i$ and 
  $\deg(\det( H_n(\sigma,\tau) ))\leq{}n(n-1)$.  
  Hence, by the
DeMillo-Lipton/Schwartz/Zippel Lemma %
  \begin{equation}
    \text{Prob}\left(\det( H_n(s,t) ) \ne 0\right)\geq
    1-\frac{n(n-1)}{|S|}.
  \end{equation}  
  As soon as $|S|\geq{}2n(n-1)$, the probability of success 
for  Peggy 
is thus larger than $1/2$ %
and the expected number of trials
  is less than~$2$.\,$\Box$
\end{enumerate}
}%

We gather the differences between the protocols of
Figure~\ref{fig:WiedDet} and~\ref{fig:GammaDet} in
Table~\ref{tab:recap}. 
The two
certificate differ mostly only in the preconditioning.
But this allows to gain a lot of randomization:
the number of random field elements per try to sample for Peggy is
reduced from $3n$ to only~$2$. As the size of the set is $\Omega(n^2)$, 
this reduces the number of random bits from $O(n\log(n))$ to $O(\log(n))$.

\begin{table}[htb]
\renewcommand{\arraystretch}{1.3}
\[\begin{array}{|c||c|c|}
\hline
\multicolumn{3}{|c|}{\text{Certificates for the determinant of sparse matrices}}  \\
\hline
\hline
{\text{Preconditioner}}& \S~\ref{sec:diag}: D  & \S~\ref{sec:gammadet}: \Gamma(t,s) \\
\hline
{\text{Verifier}} & \mu(A)+15n+o(n) & \mu(A)+13n+o(n)\\
{\text{Communications}} & 8n & 5n\\
{\text{Random elements}} & 3n+2 & 3\\
\hline
{\text{Prover}} & 
\multicolumn{2}{|c|}{\text{\sc MinPoly}(n)+\text{\sc LinSys}(n)}\\
\hline
{\text{Field size}} & \geq{}\frac{1}{2}(n^2-n) &  \geq{}n^2-n\\
\hline
\end{array}\] %
\caption{\mdseries %
Summary of the complexity bounds of the certificates presented in this
  paper for the determinant of sparse matrices 
  ($n$ is the dimension, %
$\mu(A)$  bounds the cost of one
  matrix-vector product, {\sc MinPoly}$(n)$ (resp. {\sc LinSys}$(n)$) is 
  the cost of computing the minimal polynomial of a sequence %
(resp. of
  solving a linear system).}\label{tab:recap} 
\end{table}
\medskip

\begin{remark}[Certificate for the characteristic polynomial]\upshape%
The certificate for the determinant can be combined with
the characteristic polynomial reduction
of~\cite[Figure~1]{jgd:2014:interactivecert}.
As this reduction is linear for Victor, this then provides now also a
{\em linear} time verification procedure for the characteristic
polynomial:
\begin{enumerate}
\item Peggy sends $c^A$ as the characteristic polynomial; \\[-0.5cm]
\item Victor now sends back a random point $r\in{}S\subseteq\KK$; \\[-0.5cm]
\item Peggy and Victor enter a determinant certificate for $r I-A$; \\[-0.5cm]
\item Once convinced, Victor checks that $\text{\rm det}(r I-A)\checks{}c^A(r)$.
\end{enumerate}
For a 
random 
$r\in{}S$, in the determinant
sub-certificate, $r I-A$ will be non-singular if
$c^A(r)\neq{}0$, hence with probability 
$\geq 1-{n}/{|S|}$. Then $\text{\rm det}(r I-A)$  is certified
using the certificate of Figure~\ref{fig:GammaDet}.

Best known algorithms for computing the characteristic polynomial, using 
either quadratic space and fast matrix multiplication or linear space~\cite{Vil00}, 
have cost $O(n^{2+\alpha})$ for some $\alpha >0$.  
The characteristic polynomial  with $\mu(A)=O(n)$ is thus an example of a problem 
whose worst-case complexity bound is super-quadratic in the verification cost $O(n)$.
\end{remark}
\bibliographystyle{abbrvurl}
\input{thebib}
\bibliography{zkfhe}
\end{document}